\newtheorem{Observation}{Observation}
\newtheorem{Proposition}{Proposition}
\newtheorem{Example}{Example}
\theoremstyle{Observation}
\newcommand{\RN}[1]{%
  \textup{\uppercase\expandafter{\romannumeral#1}}%
}
\newcommand{\myComment}[1]{}
\title{Comparing Knowledge: An Analysis of the Relative Epistemic Powers of Groups}
\author{Alexandru Baltag\footnote{Institute for Logic, Language and Computation, University of Amsterdam, A.Baltag@uva.nl.} $\,\,$ \& Sonja Smets\footnote{Institute for Logic, Language and Computation, University of Amsterdam, S.J.L.Smets@uva.nl.}}
\date{}
\begin{document}
\maketitle
%\justifying
\begin{abstract}
We use a novel type of epistemic logic, employing comparative knowledge assertions, to analyze the relative epistemic powers of individuals or groups of agents. Such comparative assertions can express that a group has the potential to (collectively) know everything that another group can know. Moreover, we 
look at comparisons involving various types of knowledge (fully introspective, positively introspective, etc.), satisfying the corresponding modal-epistemic conditions (e.g., $S5$, $S4$, $KT$).     
For each epistemic attitude, we are particularly interested in what agents or groups can know about their own epistemic position relative to that of others.
\smallskip\par\noindent
% Keywords
{\bf Keywords:} epistemic logic; multi-agent epistemic logic; epistemic comparison; logics for multi-agent systems; collective epistemic attitudes; comparative knowledge; group knowledge.
\end{abstract}

\section{Introduction} 

{We} use a novel type of epistemic logic to study scenarios in which agents gain access to other agents' information or data, and hence they can in principle learn everything known to those others. These are scenarios in which one individual or group of agents knows at least as much as another individual or group of agents. To~study these scenarios, we introduce specific epistemic comparison statements in our logical language. We are specifically interested in what these individual agents or groups of agents can deduce about their own epistemic position relative to that of others, who may know more or less than them. The~answer to this question is the topic of this paper and, as we will show, it depends on the type of epistemic attitude that we attribute to the individual agents and to groups of~agents.

\par\noindent

In this paper, we make use of different notions of knowledge, including both weaker and stronger notions, as~studied in the
context of standard epistemic {logic} 
 \cite{FHMV}. The~weakest attitude of knowledge we consider in this paper, assumes only the veracity of knowledge for agents who know all tautologies and the logical consequences of their knowledge. It is the concept studied in the modal logic known as $KT$. A stronger epistemic notion extends the veracity-requirement with ``positive introspection'' in the modal logic $S4$, i.e.,~knowledge is taken to be truthful and if these agents know a proposition then they know that they know it. The~strongest notion adds the requirement of ``negative introspection'' to $S4$, so if these agents do not know a proposition, they also know that they do not know it. The~strongest notion is studied in the modal logic $S5$ and is obtained by quantifying over all the possible alternatives that are consistent with the agent's available information. In~this sense, the~strongest form of knowledge in this paper is closely related to the epistemological conception of ``information-as-range'' (see~\cite{vB}).

\par\noindent

In addition to the notions of individual knowledge, we consider a number of collective attitudes. One of the central notions in epistemic logic is ``distributed knowledge'', which ``corresponds to knowledge that is distributed among the members of the group, without~any individual agent necessarily having it.''~\cite{HalpernMoses1984}. Distributed knowledge is viewed as a potential collective attitude, capturing the implicit knowledge in a group of agents. It is potential in the sense that it refers to the information that a group can come to know if they pull together all the individual knowledge of each member of the group and close it under logical consequence. Hence, the distributed knowledge of a proposition $\varphi$ in a group of agents $G$ can hold without any member in the group $G$ actually knowing $\varphi$, but~potentially each individual can come to know $\varphi$ if all the group members share their information. For~example, we can think about a group of experts coming from different fields, all of whom have some partial information as to how they can solve a problem. The~combined knowledge of these experts would be the key to the solution when we close it under logical deduction, which expresses the group's distributed knowledge. Other examples come from computer science where we reason about distributed systems. For~instance, in a multi-body planning task, the~information coming from different sensors can be collected together and acted upon in order to reach a specific~goal.

\par\noindent

Another collective attitude of central importance is ``common knowledge''. Common knowledge is described in~\cite{HalpernMoses1984} as a form of ``public knowledge''. Common knowledge is here viewed as the strongest epistemic attitude that a group can have; it requires at the start that the group has mutual knowledge of $\varphi$, i.e.,~that every individual in the group knows $\varphi$, but~also all levels of higher-order knowledge including that everybody in the group knows that everybody knows $\varphi$ and that everybody knows that, etc. Common knowledge, in~this hierarchical account---for the history of the development of this concept, see, e.g.,~\cite{SEPOnCommonKnowledge}---extends the notion of mutual knowledge with the requirement of ad infinitum levels of higher-order knowledge. Common knowledge is an important concept in different fields, it is related to J. McCarthy's concept of that what any fool knows (see~\cite{16,FHMV}) as studied in the context of common-sense reasoning in AI, but~it also refers to what is established as the common ground in linguistic scenarios and as that what can be coordinated~upon.

\par\noindent

In our analysis of individual and group comparative knowledge, we will distinguish between the case in which an individual or a group of agents implicitly knows at least as much as another individual or group, the~case in which they implicitly know strictly more than others, the~case in which they implicitly know the same things (i.e., they are epistemically equivalent), and the case in which they have different uncertainties and hence are epistemically incomparable. For~these different cases, we consider what agents individually or collectively know about their own epistemic position, using the variety of epistemic notions mentioned above. Our analysis will make essential use of the tools of multi-agent epistemic~logic.

\par\noindent

This paper is organized as follows. In~Section~\ref{sec2} we introduce the background information on the logic of distributed and common knowledge. In~Section~\ref{sec3} we proceed with the analysis of comparative knowledge. In~Section~\ref{sec4} we end the paper with a conclusion and some pointers for further~work.

\section{Group Epistemic Logic: Distributed, Common, and Comparative Knowledge} \label{sec2}

Given a group of agents whose distributed knowledge can be computed, the~question is if everyone contributes equally to the knowledge of the group? In this section we show that this is not necessarily the case. Indeed, it may well be that all the knowledge of some agent (or group of agents) is already fully covered by another agent (or group of agents). Some agent (or subgroup of agents) may know at least as much as another agent in the group (or other subgroup). The~analysis of such scenarios requires us to compare the knowledge of agents. We will proceed in this paper by using the tools of modal logic (see~\cite{BRV}) and, more specifically, epistemic logic (see~\cite{FHMV}).

\par\noindent

\subsection{Syntax and~Semantics}

In this section, we introduce the logic obtained by adding a ``group epistemic comparison'' operator $A\preceq B$ to the standard multi-agent epistemic logic with distributed and common knowledge from~\cite{FHMV}. The~idea is to capture the epistemic comparison between both individuals and groups of agents. Earlier work in~\cite{Barteld} already includes a comparative modal operator $a \preceq b$ to express that locally, in~a given possible world, all sentences that are known by agent $a$ are also known by agent $b$. In~\cite{LPAR}, we generalized this idea to groups of agents, such that the epistemic comparison statements of the form $A \preceq B$ express that the group of agents $A$ knows at least as much as group $B$, and~we axiomatized the resulting logic on the class of $S5$-models. Here, we will study such group epistemic comparisons in the more general context of epistemic models of three types ($KT$, $S4$, and $S5$).

\smallskip\par\noindent
\textbf{{Vocabulary:} 
 agents, groups, and~atoms.} Throughout this paper, we assume a fixed (finite) set ${\cal A}=\{a, b, c, \ldots\}$ of \emph{agents}, and~a set $\Omega=\{p,q,r,\ldots\}$ of \emph{atomic sentences}.
  We use capital letters $A, B, C, G, \ldots$ to denote \emph{groups} of agents, i.e.,~non-empty subsets of ${\cal A}$.

\smallskip\par\noindent
{\bf Syntax.} The language of the logic {$LDC\preceq$} of Distributed knowledge, Common knowledge, and Epistemic Group Comparisons is recursively defined by the following BNF form: 
$$
\begin{array}{cccccccccccc}
\varphi ::= &  p &|& \neg \varphi &|& \varphi \wedge \varphi &|& K_A\varphi &|& K^A \varphi &|& A \preceq B
\end{array}
$$

\par\noindent
Here, atoms $p$ denote ontic ``facts'' (of a non-epistemic nature); $\neg$ and $\wedge$ are classical negation and conjunction; $K_A \varphi$ says that $\varphi$ is group (distributed) knowledge in the group of agents $A \subseteq {\cal A}$; $K^A \varphi$ says that $\varphi$ is common knowledge in the group  $A$; while $A\preceq B$ is an epistemic comparison statement, saying that group $A$'s distributed knowledge includes group $B$'s distributed knowledge. In~general, when we refer to a group's ``knowledge'', we will mean the group's distributed knowledge. So $A\preceq B$ can be read as short for ``group $A$ knows at least as much as group $B$'' (or group $A$ knows all $B$ knows). Note that for groups consisting of single agents, we will skip the brackets and write $a \preceq b$ instead of $\{a\} \preceq \{b\}$. 
\par\noindent
Note that individual knowledge $K_a\varphi$ is not a primitive notion in this language: it will be defined as the distributed knowledge of the singleton-agent group $\{a\}$.

\smallskip\par\noindent
\textbf{Abbreviations.} We have the usual Boolean notions of disjunction $\varphi\vee \psi$, implication $\varphi\to\psi$, and bi-conditional $\varphi\leftrightarrow\psi$, and in addition, we adopt the following abbreviations (where $A\not\preceq B$ is itself an abbreviation for $\neg (A\preceq B)$):

\begin{table}[H]
\begin{center}
\resizebox{11cm}{!}{%
\begin{tabular}{ccccc}
(Individual knowledge) && $K_a \varphi$ & $:=$ & $K_{\{a\}} \varphi$ \ \\
(Strict comparison) && ${A\prec B}$ & $:=$ & $A\preceq B\wedge B\not\preceq A$ \ \\
(Epistemic equivalence) && $A \equiv B$ & $:=$ & $(A \preceq B) \wedge (B \preceq A)$ \ \\
(Epistemic incomparability) && $A \perp B$ & $:=$& $(A\not\preceq B) \wedge (B\not\preceq A)$ \ \\
\end{tabular}
}
\end{center}
\end{table}
\vspace{-0.5cm}
\par
To provide a semantic interpretation of the constructs of the language, we consider \emph{three types of epistemic models}: $KT$-models (the most general ones), $S4$-models, and $S5$-models. They differ by increasingly strong introspection assumptions (on individual knowledge):

%\begin{Definition}
\smallskip\par\noindent
{\bf Epistemic ($KT$) models.} An \emph{epistemic model} (also called a \emph{$KT$-model}) is just a reflexive multi-agent Kripke model (see~\cite{BRV}); i.e.,~a relational structure $M = (S, R_a, ||{\bullet}||)_{a\in {\cal A}}$, where

\begin{itemize}
\item $S$ is, as usual, a set of possible worlds;
\item for each agent $a \in {\cal A}$, $R_a\subseteq S\times S$ is a \emph{reflexive} accessibility relation, denoting \emph{epistemic possibility}: $sR_a w$ means intuitively that world $w$ is consistent with agent $a$'s knowledge in world $s$ (hence, $w$ is a ``possible'' alternative for $s$ from $a$'s perspective);
    \item the valuation or \emph{truth-assignment} function, $||{\bullet}||: \Omega \to \mathcal{P}(S)$, maps every element of the given set $\Omega$ of atomic sentences into sets of possible worlds. Intuitively, for~every given atomic sentence $p \in \Omega$, the~valuation $||p||$ tells us in which worlds the factual content of $p$ is true.
\end{itemize}

\par\noindent
The fact that we require $R_a$ to be reflexive corresponds to the assumption of Veracity of Knowledge (i.e., \emph{knowledge implies truth}: $K_a\varphi\to \varphi$), also known as the axiom $T$ in Modal~Logic.

\smallskip\par\noindent
{\bf $S4$ models.} An epistemic model is an \emph{$S4$ model} if all accessibility relations are \emph{preorders} (i.e., not only reflexive, but~also \emph{transitive}). Requiring $R_a$ to be transitive corresponds to the assumption of Positive Introspection (i.e., \emph{knowledge implies knowledge of knowledge}: $K_a\varphi\to K_a K_a\varphi$), also known as the KK Principle in Epistemology, or~axiom $4$ in Modal~Logic.

\smallskip\par\noindent
{\bf $S5$ models.} Finally, an~epistemic model is an \emph{$S5$ model} if all accessibility relations are \emph{equivalence relations} (i.e., \emph{reflexive, symmetric, and transitive}). In~the context of the other two conditions, symmetry of $R_a$ is equivalent to another relational property, called Euclideaness, which corresponds to the assumption of Negative Introspection (i.e., \emph{ignorance implies knowledge of ignorance}: $\neg K_a\varphi\to K_a \neg K_a\varphi$), also known as the axiom $5$ in Modal~Logic.

Of course, these classes are included in each other: $S5$-models are $S4$-models, and~$S4$-models are $KT$-models.

\smallskip\par\noindent
\textbf{Joint possibility and common-knowledge possibility relations.}
To define notions of group knowledge, we need to extend our accessibility relations to groups of agents $A \subseteq {\cal A}$. The~\emph{joint possibility} relation $R_A$ is defined as the intersection of the individual accessibility relations for the agents in the group $A$, i.e.,
$$R_A \,\, :=\,\, \bigcap_{a\in A} R_a$$

\par
Essentially, $s R_A w$ means that in world $s$, all agents in $A$ consider $w$ possible: none of them can exclude it based on their knowledge. Joint possibility $R_A$ is the accessibility relation underlying the standard notion of \emph{distributed knowledge} for group $A$: in this sense, $sR_A w$ means that world $w$ is consistent with group $A$'s distributed knowledge in world $s$. Dually, the~\textit{common-knowledge possibility relation} $R^A$ points towards the possibilities allowed by the group's common knowledge: $sR^A w$ means that world $w$ is consistent with group $A$'s common knowledge in world $s$. Formally, $R^A$ is the reflexive transitive closure
 \footnote{As our epistemic relations $R_a$ are already reflexive, this is the same as the transitive closure.} of the individual relations in group $A$, i.e.,

$$R^A \,\, :=\,\, (\bigcup_{a\in A} R_a)^\ast$$

\par
The relation $R^A$ will be the accessibility relation underlying the standard notion of \emph{common knowledge} for group $A$. Note that $R^A$ is always transitive, even when the underlying relations are not. Because~of this, while
 $R_{\{a\}}=R_a$ holds in all epistemic ($KT$) models, in~general we have $R^{\{a\}}\neq R_a$ (except when the models are $S4$). So individual knowledge will in general be a special case of distributed knowledge (of a singleton group $\{a\}$), but~\emph{not} a special case of common~knowledge.

\par\noindent
\begin{Observation}\label{sum}
The relation for the distributed knowledge of the union of two groups of agents $A \cup B$, i.e.,~$R_{A\cup B}$, is given by the intersection of the relations for each group, i.e.,~$R_A \cap R_B$.
\end{Observation}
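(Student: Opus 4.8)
The plan is to reduce the claim to the elementary set-theoretic identity that intersecting a family of sets indexed by a union of two index sets is the same as intersecting the two sub-families separately. Concretely, I would first unfold the definition of joint possibility: by definition $R_{A\cup B} = \bigcap_{g \in A\cup B} R_g$, while $R_A \cap R_B = \bigl(\bigcap_{a\in A} R_a\bigr) \cap \bigl(\bigcap_{b\in B} R_b\bigr)$. So the whole statement amounts to checking that these two expressions coincide.

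The core step is then a pointwise argument. For arbitrary worlds $s, w$, we have $s R_{A\cup B} w$ iff every agent $g \in A \cup B$ considers $w$ possible at $s$; since membership in $A \cup B$ is the same as membership in $A$ or membership in $B$, this holds iff every agent in $A$ considers $w$ possible at $s$ \emph{and} every agent in $B$ does — i.e.\ iff $s R_A w$ and $s R_B w$, which is exactly $s\,(R_A \cap R_B)\,w$. This uses nothing beyond the definition of $R_G$ and the distribution of universal quantification over disjunction of the defining conditions.

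There is no substantial obstacle here: the statement is essentially definitional, and the only ``work'' is making the index-set bookkeeping explicit. One could alternatively phrase the argument purely at the level of sets, invoking $\bigcap_{i \in I \cup J} X_i = \bigl(\bigcap_{i\in I} X_i\bigr) \cap \bigl(\bigcap_{j\in J} X_j\bigr)$ directly, but the pointwise version is the most transparent and is what I would include, noting that it also makes manifest the intended reading of $R_{A \cup B}$ recorded in the paragraph just above the Observation.
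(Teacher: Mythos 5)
Your proposal is correct and matches the paper's own proof, which likewise just unfolds $R_{A\cup B}=\bigcap_{c\in A\cup B}R_c$ and splits the intersection over the union of index sets into $(\bigcap_{c\in A}R_c)\cap(\bigcap_{c\in B}R_c)$. Your additional pointwise elaboration is a harmless expansion of the same definitional argument.
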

\begin{proof}
Unfolding the definition of the relation for distributed knowledge we obtain $R_{A \cup B} =$ \linebreak ${\bigcap_{c \in A \cup B} R_c}$ which is given by $(\bigcap_{c \in A} R_c)\cap (\bigcap_{c\in B}R_c)$, i.e.,~$R_A \cap R_B$.
\end{proof}

\smallskip\par\noindent
{\bf Semantics.} Given any epistemic model $M = (S, R_a, ||{\bullet}||)_{a\in {\cal A}}$, we can define the \emph{satisfaction relation} $w\models_M \varphi$ (between worlds $w$ in $M$ and formulas $\varphi$); and when the model is fixed, we skip the subscript and simply write $w\models \varphi$.
The definition is by recursion on the complexity of formulas $\varphi$, via the following clauses:
$$w \models p \mbox{ iff } w \in ||p||$$
$$w \models \neg\varphi \mbox{ iff } w \not \models \varphi $$
$$w \models \varphi\wedge \psi \mbox{ iff } w  \models \varphi \mbox { and } w \models \psi $$
$$w \models K_A\varphi \mbox{ iff } \forall s \in S (w R_A s \mbox{ implies } s \models \varphi)$$
$$w \models K^A\varphi \mbox{ iff } \forall s \in S (w R^A s \mbox{ implies } s \models \varphi)$$
$$w \models A\preceq B \,\, \mbox{ iff }
\, \forall s\in S \,\, (w R_A s \, \mbox{implies} \, w R_B s).$$

\par\noindent
In words: the semantics of atoms is given by the valuation; the semantics of Boolean connectives is classical; $K_A$ and $K^A$ are defined as standard Kripke modalities for the relations $R_A$ and $R^A$; finally, $A\preceq B$ holds iff group $A$ (distributedly) knows all that group $B$ (distributedly) knows (or equivalently: every world accessed by group $A$'s joint possibility relation is also accessed by group $B$'s joint possibility relation).

\smallskip\par\noindent
\textbf{Validity ($KT$, $S4$, or $S5$).} We can define a notion of validity for each of our three classes of epistemic models: we say that $\varphi$ is \emph{$KT$-valid}, and~write $\models_{KT}\varphi$, if~$\varphi$ is true in all possible worlds in all $KT$-models; similarly, we define \emph{$S4$-validity} $\models_{S4}\varphi$, and~\emph{$S5$-validity} $\models_{S5}\varphi$.

\par\noindent
Note that the strict version of the epistemic comparison operator expresses that (locally, in~the actual world) group $A$ is  ``more expert'' than (or ``epistemically superior'' to) group $B$. The~epistemic equivalence statement $B \equiv C$
says that  the epistemic positions of $B$ and $C$ are equally strong (in the actual world). Finally, $A\perp B$ means that the two groups are epistemically incomparable: none of them is superior, but~they are not equally strong either. This leads to the following observation:
 \begin{Observation}\label{constructs} The following formulas are $KT$-valid (hence, also $S4$- and $S5$-valid):
 $${(B \not \preceq C)} \leftrightarrow {(C \prec B \vee B \perp C)}$$
 $$ {(B \not \prec C)} \leftrightarrow {(C \preceq B \vee B \perp C)}$$
 $$ {(B \not \perp C)} \leftrightarrow {(C \preceq B \vee B \preceq C))}$$
\end{Observation}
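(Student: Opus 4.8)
The plan is to reduce each of the three equivalences to a propositional tautology by unfolding the defined operators $\prec$ and $\perp$ in terms of the primitive comparison $\preceq$ and its negation $\not\preceq$. The key observation is that the semantics of $A \preceq B$, like that of the Boolean connectives, is evaluated locally at a world: whether $w \models_M A \preceq B$ depends only on $M$ and $w$, and the defined operators combine such statements purely truth-functionally. Hence, abbreviating $x := (B \preceq C)$ and $y := (C \preceq B)$, every formula occurring in the three equivalences has, at any world $w$ of any $KT$-model, a truth value that is a fixed Boolean function of the truth values of $x$ and $y$ at $w$; so $KT$-validity of each biconditional reduces to checking that the corresponding Boolean combination is a propositional tautology in the atoms $x, y$.

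Concretely, unfolding the abbreviations from the excerpt, $B \not\preceq C$ is $\neg x$; $C \prec B$ is $(C\preceq B)\wedge(B\not\preceq C)$, i.e.\ $y \wedge \neg x$; $B \perp C$ is $(B\not\preceq C)\wedge(C\not\preceq B)$, i.e.\ $\neg x \wedge \neg y$; $B \not\prec C$ is $\neg(x \wedge \neg y)$; and $B \not\perp C$ is $\neg(\neg x \wedge \neg y)$. Substituting, the first equivalence becomes $\neg x \leftrightarrow ( (y \wedge \neg x) \vee (\neg x \wedge \neg y) )$, which holds by factoring out $\neg x$ and using $y \vee \neg y$. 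The second becomes $\neg(x \wedge \neg y) \leftrightarrow ( y \vee (\neg x \wedge \neg y) )$, which simplifies on both sides to $\neg x \vee y$. The third becomes $\neg(\neg x \wedge \neg y) \leftrightarrow (y \vee x)$, which is De Morgan. Each is a propositional tautology.

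To turn this into a formal argument I would: (i) observe that for fixed $M$ and $w$, the semantic clause for $\preceq$ together with the clauses for $\neg$ and $\wedge$ and the definitions of $\prec$ and $\perp$ yield exactly the Boolean recipes listed above for the six formulas; (ii) invoke the soundness of classical propositional logic under this pointwise reading, so that any substitution instance (in $x,y$) of a propositional tautology is true at every world of every $KT$-model, hence $KT$-valid; and (iii) note that $S4$- and $S5$-validity follow since $S5$-models are $S4$-models and $S4$-models are $KT$-models, as already remarked in the excerpt.

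I do not expect a genuine obstacle: the only point worth spelling out is step (i) --- that $\preceq$-statements behave as propositional atoms with respect to the remaining connectives --- after which the three verifications are routine Boolean manipulations, equivalently settled by a four-row truth table on $x$ and $y$.
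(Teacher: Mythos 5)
Your proposal is correct: the paper states this Observation without proof, treating it as immediate from the abbreviations, and your argument---unfolding $\prec$ and $\perp$ into Boolean combinations of $x=(B\preceq C)$ and $y=(C\preceq B)$, checking the three propositional tautologies, and noting that validity of substitution instances of tautologies is inherited across the model classes---is exactly the routine verification the authors leave implicit. All three Boolean identities you compute are right, so nothing further is needed.
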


\par
The following examples illustrate how we represent propositions about the epistemic attitudes of agents, including their individual knowledge, their distributed, and their common knowledge. We focus in particular on the distinction between fully introspective agents (Example 1) and agents that are positive but not negative introspective (Example 2).

 \begin{Example}
\rm{Consider three agents, Alice $a$, Bob $b$, and Charles $c$, who reason about an opaque box containing both a penny and a quarter.\footnote{This is a variation of the example in~\cite{Hughes}, which uses a penny--quarter box to illustrate the logical properties of a classical physical system.} We consider four atomic propositions to describe the state of the box, $\Omega = \{H_1,T_1,H_2,T_2\}$, where the first indexed propositions ($H_1$ and $T_1$) represent the possible states that describe the upper face of the penny, and the second indexed propositions ($H_2$ or $T_2$) represent the possible states that describe the upper face of the quarter. The~state of the box is determined by the upper face of each coin. In~our example, we use the  abbreviations $HH, HT, TH, TT$ respectively to denote ($H_1 \wedge H_2,H_1 \wedge T_2,T_1 \wedge H_2,T_1 \wedge T_2$). The~three agents $a,b,c$ are taken to be fully (i.e., both positive and negative) introspective. No agent knows the state of the box, which is invisible to them, but~they each have some specific partial information. It is common knowledge in the group $\{a,b,c\}$ that agent $a$ is told the state of the penny, agent $b$ is told the state of the quarter, and agent $c$ is told if the coins match or differ. To~represent the epistemic states of these agents (individually or for any (sub)group), we draw the epistemic $S5$ model in Figure~\ref{Fig1.}.}
\end{Example}

\vspace{-0.3cm}
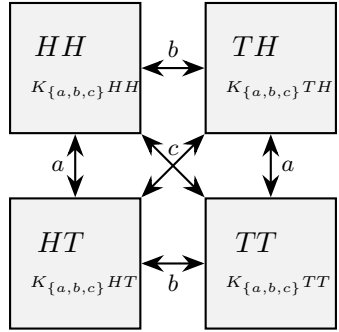
\begin{figure}[H]
\begin{center} 
\begin{tikzpicture}[node distance=2.6cm]
    \tikzstyle{w}=[draw=black, fill=gray!10, thick, text width=5em, text centered, minimum size=4.5em]
    \tikzstyle{m}=[draw=black, fill=gray!10, thick, text width=5
    em, text centered, minimum size=4.5em]
    \tikzstyle{every edge}=[draw, thick, font=\footnotesize]
    \tikzstyle{every label}=[font=\footnotesize]

    % Nodes
    \node[m,text width=1.2cm] (w1) {$HH \quad$ {\tiny{$ {K_{\{a,b,c\}} HH}$}}};
    \node[w, right of=w1,text width=1.2cm] (w2) {$TH \quad$ {\tiny{$  {K_{\{a,b,c\}} TH}$}}};
    \node[w, below of=w1,text width=1.2cm] (w3) {$HT \quad$ {\tiny{$  {K_{\{a,b,c\}} HT}$}}};
    \node[w, below of=w2,text width=1.2cm] (w4) {$TT \quad$ {\tiny{$  {K_{\{a,b,c\}} TT}$}}};

    % Edges
    \path (w1) edge[{Stealth[length=3mm]}-{Stealth[length=3mm]}] node[above] {$b$} (w2);
    \path (w1) edge[{Stealth[length=3mm]}-{Stealth[length=3mm]}] node[left] {$a$} (w3);
    \path (w1) edge[{Stealth[length=3mm]}-{Stealth[length=3mm]}] node[above] {$c$} (w4);
    \path (w2) edge[{Stealth[length=3mm]}-{Stealth[length=3mm]}] node[above] {$\,$} (w3); % Add a label here
    \path (w2) edge[{Stealth[length=3mm]}-{Stealth[length=3mm]}] node[right] {$a$} (w4);
    \path (w3) edge[{Stealth[length=3mm]}-{Stealth[length=3mm]}] node[below] {$b$} (w4);
  \end{tikzpicture}
  \end{center}
 \vspace{-0.2cm}
\caption{The drawing represents the epistemic $S5$ model for the penny--quarter box with three agents and a possible world for each of the four possible states of the box. All accessibility relations of agents are drawn except for the reflexive loops. At~each world, we specified (a conjunction of) the atomic facts that are true and we gave an example of a true epistemic proposition of the form $K_{\{a,b,c\}} \varphi$.}
\label{Fig1.}
\end{figure}

The accessibility relations in Figure~\ref{Fig1.} indicate that no matter which possible world represents the actual state of the box, nobody knows the true state. It is common knowledge in this model that each of the agents has some specific information about the state of the penny--quarter box. Indeed, it is common knowledge in the group $\{b,c\}$ that agent $a$ knows whether the first coin is heads up or tails up, i.e.,
$K^{\{b,c\}} (K_a (H_1 \wedge \neg T_1) \vee K_a (T_1 \wedge \neg H_1))$.
Note that this does not imply that agents $b$ or $c$ know which of these two disjuncts $a$ knows.
Moreover, in~the top-left world, $HH$ is distributed knowledge in any $2$-agent group: {\bf $K_{\{a,b\}} HH \wedge K_{\{b,c\}} HH \wedge K_{\{a,c\}} HH$} and in the $3$-agent group {\bf $K_{\{a,b,c\}} HH$}.  Observe in this example that if at least two agents pull all their information together, they will know the true state (heads or tails) of both coins. One verifies that indeed if $a$ and $b$ share what they know, they find out the true state of the box, and~similarly $a$ can find out what the true state of the box is by talking only to $c$. Hence, in~each possible world, any $2$-agent group knows the truth about the box, but no single agent~does.

\begin{Example}
 \rm{In a variation of the same scenario, we consider now two agents, $a$ and $b$, who are positive but not negative introspective, when reasoning about the state of the penny--quarter box. We draw an $S4$ model in Figure~\ref{Fig2.}, in~which we can verify that the agents can be ignorant without knowing it (i.e., negative introspection fails). For~instance, in~world $s$, agent $a$ does not know $TT$, i.e.,~$s \models \neg K_a TT$, and~she does not know her ignorance, i.e.,~$s \models \neg K_a \neg K_a TT$ (as she considers possible world $t$, in~which she \emph{would} know $TT$). Similarly, in~the same world $s$ agent $b$ does not know $HT$, i.e.,~$s \models \neg K_b HT$, but~he does not know his ignorance, i.e.,~$s \models \neg K_b \neg K_b HT$ (as he considers a possible world $v$, in~which he \emph{would} know $HT$). Note also that in world $s$ the group $\{a,b\}$ has distributed knowledge that $\neg(TT)$ is the case, i.e.,~$s \models K_{\{a,b\}}\neg (TT)$, but~they cannot exclude (even collectively, as~a group) any of the other three possibilities. Finally, note that in this model, it is common knowledge that $b$ knows whether $TT$ or $\neg(TT)$, i.e.,~we have $K^{\{a,b\}}(K_b TT \vee K_b \neg(TT))$.}
\end{Example}

\vspace{-0.3cm}
\begin{figure}[H]
\begin{center}
\begin{tikzpicture}[node distance=2.8cm]
    \tikzstyle{w}=[draw=black, fill=gray!10, thick,  minimum size=1.8em]
    \tikzstyle{m}=[draw=black, fill=gray!10, thick,  minimum size=1.8em]
    \tikzstyle{every edge}=[draw, thick, font=\footnotesize]
    \tikzstyle{every label}=[font=\footnotesize]

\tikzstyle{labelAgente}=[sloped,font = \scriptsize]

    % Nodes
    \node[m,label = {[labelAgente]above:${\bf s}$},text width=1.2cm] (w1) {$\quad HH \quad \quad$ };
    \node[w,label = {[labelAgente]above:${\bf t}$},right of=w1,text width=1.2cm] (w2) {$\quad   TT \quad \quad$ };
    \node[w, label = {[labelAgente]below:${\bf u}$}, below of=w1,text width=1.2cm] (w3) {$\quad TH \quad \quad$ };
    \node[w, label = {[labelAgente]below:${\bf v}$}, below of=w2,text width=1.2cm] (w4) {$\quad  HT \quad \quad$ };

    % Edges
    \path (w1) edge[-{Stealth[length=3mm]}] node[above] {$a$} (w2);
    \path (w1) edge[-{Stealth[length=3mm]}] node[left] {$a,b$} (w3);
    \path (w1) edge[-{Stealth[length=3mm]}] node[above,outer sep=0.2cm,label distance=0.3cm] {$a,b$} (w4);
        \path (w3) edge[-{Stealth[length=3mm]}] node[above] {$b$} (w4);
              %  \path (w3) edge[->] node[below,outer sep=0.2cm,label distance=0.3cm] {$a,b$} (w2);
    %\path (w2) edge[..>, teal] node[above] {$\,$} (w3); % Add a label here
    %\path (w2) edge[-, red] node[right] {$a$} (w4);
    %\path (w3) edge[-, magenta] node[below] {$b$} (w4);

    %Reflexive loops
   %%%% DROPPED \path[->] (w1.130) edge[out=130,in=195,looseness=5](w1.195)node[left,outer sep=1.2cm,label distance=0.3cm] {$a,b$};
    %\path[->] (w1.130) edge[out=130,in=195,looseness=4,blue](w1.195);
    %\path[->] (w1.130) edge[out=130,in=195,looseness=3,red](w1.195);
    %%%% DROPPED \path[->] (w2.45) edge[out=45,in=-20, looseness=5](w2.-20)node[right,outer sep=1cm,label distance=0.3cm] {$a,b$};
    %\path[->] (w2.45) edge[out=45,in=-20, looseness=4,blue](w2.-20);
    %\path[->] (w2.45) edge[out=45,in=-20, looseness=3,red](w2.-20);

    %%% DROPPED \path[->] (w3.130) edge[out=120,in=195,looseness=5](w3.195)node[left,outer sep=1.1cm,label distance=0.3cm] {$a,b$};
    %\path[->] (w3.130) edge[out=120,in=195,looseness=4,blue](w3.195);
    %\path[->] (w3.130) edge[out=120,in=195,looseness=3,red](w3.195);

   %%%% DROPPED \path[->] (w4.45) edge[out=45,in=-20,looseness=5](w4.-20)node[right,outer sep=1cm,label distance=0.3cm] {$a,b$};
    %\path[->] (w4.45) edge[out=45,in=-20,looseness=4,blue](w4.-20);
    %\path[->] (w4.45) edge[out=40,in=-20,looseness=3,red]node[below,outer sep=9pt]{${\color{red}{a}},{\color{blue}{b}},{\color{teal}{c}}$} (w4.-20);
  \end{tikzpicture}
   \end{center}
   \vspace{-0.5cm}
   \caption{The drawing represents the epistemic $S4$ model with four possible worlds, $s,t,u,v$, and two agents, $a,b$, for the penny--quarter box. Except~for the reflexive loops, all other relations are~drawn.}
    \label{Fig2.}
\end{figure}
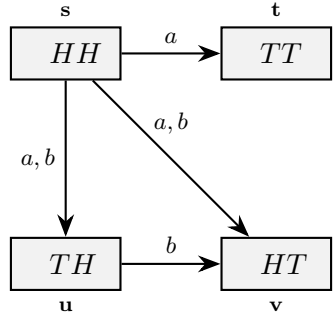
\par\noindent

We will use the provided models in these examples to reason about epistemic comparison statements in the following~section.

\subsection{Axiomatisation of $LDC\preceq$}

In~\cite{LPAR}, we provided a \emph{complete axiomatization over $S5$ models} for the logic {$LDC\preceq$} with distributed knowledge, common knowledge, and epistemic comparison.\footnote{The $LD\preceq$ fragment of our logic over $S5$ models is related to the logic LFD of functional dependence in~\cite{BvB2020a}, we refer to~\cite{LPAR} for details.} Some of these are still valid over epistemic ($KT$) models, so we will refer to them as ``axioms'', even if we do \emph{not} prove here any completeness result. In~Table~\ref{tb0}, we provide a full overview of the axioms and rules for $LDC\preceq$.

\begin{table}[H]
\begin{center}
\resizebox{15cm}{!}{%
\begin{tabularx}{\textwidth}{ll}
\toprule
\textbf{{(I)} 
} & \textbf{Axioms and rules of classical propositional logic}
 \vspace{1mm} \ \\
 \textbf{(II)} &\textbf{$KT$ axioms and rules for distributed knowledge}:   \ \\
($K_A$-Necessitation)& From $\varphi$, infer $K_A \varphi$  \ \\
($K_A$-Distribution) & $K_A (\varphi \to \psi) \to (K_A\varphi \to K_A\psi)$  \ \\
(Veracity) & $K_A\varphi \to \varphi$   \ \\
\textbf{(III)} & \textbf{Axioms for comparative knowledge}:  \ \\
(Inclusion) & $A\preceq B$, provided that $B \subseteq A$ \ \\
(Additivity) & $\left(A\preceq B\wedge A\preceq C\right) \to A\preceq B\cup C$  \ \\
(Transitivity) & $\left(A\preceq B\wedge B\preceq C\right)\to A\preceq C$ \ \\
(Knowledge Transfer) &  $A\preceq B \to \left(K_B\varphi \to K_A\varphi \right)$
\ \\
\textbf{(IV)} & \textbf{Axioms and rules for common knowledge}: \ \\
($K^A$-Necessitation) & From $\varphi$, infer $K^A \varphi$  \ \\
($K^A$-Distribution) & $K^A(\varphi\to \psi)\to (K^A\varphi\to K^A\psi)$  \ \\
($K^A$-Fixed Point) & $K^A\varphi \to (\varphi\wedge \bigwedge_{a\in A} K_a K^A\varphi)$  \ \\
($K^A$-Induction) & $K^A (\varphi \to \bigwedge_{a\in A} K_a \varphi)\to (\varphi\to K^A\varphi)$ \ \\
%\vspace{1mm} \ \\
\textbf{(V)} & \textbf{Special axioms for $S4$-models}: \ \\
(Positive Introspection) & $K_A \varphi \to K_A K_A \varphi$   \ \\
\textbf{(VI)} & \textbf{Special axioms for $S5$-models}: \ \\
(Negative Introspection) & $\neg K_A\varphi \to K_A \neg K_A \varphi$   \ \\
(Known Superiority) & $A\preceq B \to K_A (A\preceq B)$  \ \\
%\hspace{-4cm}
%\vspace{-10cm}\\
\bottomrule
\end{tabularx}
}
\end{center}
\vspace{-0.2cm}
\caption{The table represents the Axioms (I--IV) for $\mathbf{LDC\preceq}$ over $KT$-models. For~$S4$-models, we need to add group (V) (Positive Introspection). For~$S5$-models, we need to add both groups $(V)$ and $(VI)$. Here, individual knowledge is an abbreviation $K_a \varphi := K_{\{a\}}\varphi$. Only the full system for $S5$-models is known to be~complete.}\label{tb0}
\end{table}

The $KT$, $S4$, and $S5$ axioms in Table~\ref{tb0} are well known. The~Inclusion axiom expresses that a larger group knows at least as much as
its subgroups (and so, in particular, the~distributed knowledge of a group includes the individual knowledge of each of its members). Additivity says that, if~a group knows at least as much as two other groups, it will know at least as much as their union. Transitivity simply says that epistemic comparison is transitive.  The~Knowledge Transfer axiom captures the intuitive meaning of $A \preceq B$:  when $A$ knows at least as much as $B$, then $B$ knows everything that $A$ knows.
The ``Known Superiority'' axiom states that epistemically ``superior'' groups know their superiority status: a group $A$ (relative to group $B$) always knows whether it knows at least as much as $B$ (or not).

\begin{Proposition}\label{p-1} All axioms in groups (I)--(IV) are sound on $KT$ models (i.e., $KT$-valid). Positive Introspection is sound on $S4$ models (thus, also on $S5$ models), while Negative Introspection and Known Superiority are valid only on $S5$-models.
\end{Proposition}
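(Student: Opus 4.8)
The plan is to verify soundness axiom-by-axiom, grouping the work according to which relational property of the underlying Kripke models each axiom exploits. For the propositional axioms and rules in group (I), as well as $K_A$-Necessitation and $K_A$-Distribution, soundness is the usual Kripke-style argument: $K_A$ is interpreted as the box modality for the relation $R_A$, and these are the minimal normal modal logic $K$ principles, valid for \emph{any} accessibility relation whatsoever. For Veracity, $K_A\varphi\to\varphi$, I would observe that $R_A=\bigcap_{a\in A}R_a$ is reflexive because each $R_a$ is reflexive (an intersection of reflexive relations on the same domain is reflexive), so $wR_A w$ always holds, whence $w\models K_A\varphi$ forces $w\models\varphi$. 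This also uses that groups are nonempty. The common-knowledge axioms in group (IV) are the standard ones: $K^A$-Necessitation and $K^A$-Distribution again hold because $K^A$ is a normal box modality for $R^A$; the Fixed-Point axiom $K^A\varphi\to(\varphi\wedge\bigwedge_{a\in A}K_aK^A\varphi)$ and the Induction axiom are proved exactly as for standard common knowledge, using that $R^A=(\bigcup_{a\in A}R_a)^\ast$ is the reflexive transitive closure — so $R^A$ contains each $R_a$ (for $a\in A$) and is transitive, and any $R^A$-reachable world is reached by a finite $R_a$-path.

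For the comparative axioms in group (III), the key is the semantic clause $w\models A\preceq B$ iff $\forall s\,(wR_As\Rightarrow wR_Bs)$, i.e.\ $R_A\subseteq R_B$ ``at $w$'' in the pointwise sense $R_A[w]\subseteq R_B[w]$. Inclusion: if $B\subseteq A$ then $R_A=\bigcap_{a\in A}R_a\subseteq\bigcap_{b\in B}R_b=R_B$, so $R_A[w]\subseteq R_B[w]$ at every $w$; hence $A\preceq B$ is globally true. Additivity: if $R_A[w]\subseteq R_B[w]$ and $R_A[w]\subseteq R_C[w]$, then $R_A[w]\subseteq R_B[w]\cap R_C[w]=R_{B\cup C}[w]$ by Observation~\ref{sum}. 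Transitivity: $R_A[w]\subseteq R_B[w]\subseteq R_C[w]$. Knowledge Transfer: if $R_A[w]\subseteq R_B[w]$ and $w\models K_B\varphi$, then every $s$ with $wR_As$ also has $wR_Bs$, hence $s\models\varphi$, so $w\models K_A\varphi$. Positive Introspection $K_A\varphi\to K_AK_A\varphi$ on $S4$-models: here I need $R_A$ transitive, which follows since an intersection of transitive relations is transitive (if $wR_As$ and $sR_At$ then for each $a\in A$, $wR_as$ and $sR_at$, so $wR_at$ by transitivity of $R_a$), and transitivity of $R_A$ is exactly what validates axiom $4$ for the $K_A$-modality.

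For group (VI), the $S5$ case, Negative Introspection $\neg K_A\varphi\to K_A\neg K_A\varphi$ follows because an intersection of equivalence relations is an equivalence relation, in particular Euclidean, and Euclideaness validates axiom $5$ for $K_A$. Known Superiority $A\preceq B\to K_A(A\preceq B)$ is the one genuinely new point and the main thing to get right: suppose $w\models A\preceq B$, i.e.\ $R_A[w]\subseteq R_B[w]$, and let $s$ be any world with $wR_As$; I must show $s\models A\preceq B$, i.e.\ $R_A[s]\subseteq R_B[s]$. Take $t$ with $sR_At$; since $R_A$ is transitive (even an equivalence), $wR_At$, so $t\in R_A[w]\subseteq R_B[w]$, i.e.\ $wR_Bt$; also $wR_As$ with $A\supseteq\emptyset$... more precisely $wR_As$ gives $wR_bs$ for each $b\in B$ only if $B\subseteq A$, which need not hold — so instead I use symmetry: from $wR_As$ and symmetry of $R_A$, $sR_Aw$, and combined with $wR_Bt$ I need $sR_Bt$; this is where I'd argue via each individual $R_b$ for $b\in B$, using that $R_b\supseteq R_A$? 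No — rather, I should exploit that in an $S5$ model the relation $R_A$ at $w$, restricted to the $R_A$-equivalence class of $w$, already determines $R_B$ on that class when $A\preceq B$ holds at $w$, because for $b\in B$, on the cluster $R_A[w]=[w]_{R_A}$ we have $sR_bt\iff$ ($s,t$ both $R_A$-equivalent to $w$ and $wR_bt$-type conditions)... The cleanest route: show that for any $b\in B$ and any $s,t\in[w]_{R_A}$, $sR_bt$ holds iff $wR_bt'$ for the representative, using that $R_b$ is an equivalence relation containing the equivalence relation $R_A$ on the common class — then $A\preceq B$ at $w$ says $[w]_{R_A}\subseteq R_b[w]$ for all $b\in B$, which by the equivalence-relation structure propagates to every $s\in[w]_{R_A}$. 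I expect this cluster-chasing to be the only delicate step; everything else is routine verification of frame correspondences. Finally, for completeness of the Proposition, I would note (by exhibiting the obvious countermodels already implicit in Examples~1 and~2, e.g.\ the $S4$ model of Figure~\ref{Fig2.}) that Negative Introspection and Known Superiority genuinely \emph{fail} on some $S4$-model, so they are not $S4$-valid — though since the statement only asserts they ``are valid only on $S5$-models'', pointing to such a countermodel suffices.
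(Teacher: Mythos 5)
Most of your verification is routine and correct: the normal-modal-logic axioms, Veracity via reflexivity of $R_A=\bigcap_{a\in A}R_a$, the four comparative axioms from the pointwise inclusion $R_A[w]\subseteq R_B[w]$, the common-knowledge Fixed-Point and Induction axioms from the closure properties of $R^A$, and Positive/Negative Introspection from the fact that intersections of preorders (resp.\ equivalence relations) are preorders (resp.\ equivalence relations). The countermodel remark at the end is also in line with what the paper does (it defers the failure of Known Superiority in $S4$ to the analysis of Figure~\ref{Fig2.}).

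The genuine gap is exactly where you predicted it: Known Superiority. Your argument obtains $wR_Bt$ and then stalls, first trying $R_b\supseteq R_A$ (false in general unless $b\in A$), then ending with a cluster-chasing plan prefaced by ``I expect''. The step you are missing is a second application of the hypothesis, this time to $s$ itself: from $wR_As$ and $R_A[w]\subseteq R_B[w]$ you get $wR_Bs$. Now you have $wR_Bs$ and $wR_Bt$, and Euclideanness of $R_B$ (equivalently: symmetry giving $sR_Bw$, then transitivity with $wR_Bt$) yields $sR_Bt$, which is what was required. This is precisely the paper's proof of Proposition~\ref{proposition2}. Your cluster sketch can in fact be completed along the same lines --- $s,t\in R_A[w]\subseteq R_b[w]=[w]_{R_b}$ for each $b\in B$, and any two members of an $R_b$-equivalence class are $R_b$-related, hence $sR_Bt$ --- but as written it is a statement of intent, not a proof, and the crux of the whole proposition is this one step.
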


The fact that ``known Superiority'' is valid in $S5$ but fails in $S4$ will be shown in the next section. Furthermore, note that there is \emph{no ``Known Inferiority'' analogue}: $A\preceq B$ does \emph{not} imply $K_B (A\preceq B)$. We will provide the relevant counterexamples~below.

\par\noindent

In addition to these axioms, we consider in this paper also the following $KT$-validity:
$$\mbox{(group monotonicity)} \, \,\,\,\,\,  \models_{KT} K_B \varphi \to K_{B\cup C} \varphi$$
\par\noindent
Group monotonicity is usually taken as an axiom in systems for distributed knowledge, but~in our setting it is \emph{provable from the above axioms} (on $KT$ models).

Before we analyze the validities involving statements about knowledge comparison in the next section, we first use the above examples to illustrate how one can evaluate different epistemic comparison statements in these~models.

\smallskip\par\noindent
{\bf{Analysis of Example 1.} 
} We analyze a number of epistemic comparison statements in the $S5$-model of Example 1. Locally, whatever is the true state of the coins, every 2-agent group knows more than any other single agent in the group. For~example, together, $a$ and $b$ as a group, i.e.,~${\{a,b\}}$, knows more about the coins than ${c}$. Observe indeed that $\{a,b\} \preceq c$ but $c \not\preceq \{a,b\}$ holds in all possible worlds. Moreover, this epistemic statement is common knowledge among all agents in all four worlds. Observe in Figure~\ref{Fig1.} that $\{a,b\} \prec c$ is true in all worlds and that it can be reached by any concatenation of accessibility relations for any of the agents in the group $\{a,b,c\}$, hence in all worlds we have $K^{\{a,b,c\}} (\{a,b\} \prec c)$. If~$b$ would talk to both $a$ and $c$, then the two groups $\{a,b\}$ and $\{b,c\}$ would know the same information about the coins. Hence these groups have the same distributed knowledge, so they are ``epistemically equivalent'' $\{a,b\} \equiv \{b,c\}$ in all worlds, i.e.,~$\{a,b\} \preceq \{b,c\}$  and $\{b,c\} \preceq \{a,b\}$.

Epistemic comparison statements are not always commonly known in epistemic $S5$ models. We illustrate this with a slight variation of the above example:

\begin{Example}
\rm{We continue the above story of Example 1, with~the three fully introspective agents who reason about the state of a box containing a penny and a quarter. In~this example, the three agents are publicly told in advance that $TT$ is not possible. Hence the agents reason about the three remaining options: $HH,HT,TH$. As~before, it is common knowledge that agent $a$ is told the state of the penny, agent $b$ knows the state of the quarter, and $c$ knows if the coins match or differ. We draw the epistemic $S5$ model in Figure~\ref{Fig5.}.}
\end{Example}
\vspace{-0.2cm}
\begin{figure}[H]
\begin{center}

\begin{tikzpicture}[node distance=2.8cm]
        \tikzstyle{zz}=[decorate,decoration={zigzag,post=lineto,post length=5pt}]
    \tikzstyle{w}=[draw=black, fill=gray!10, thick, minimum size=1.6em]
    \tikzstyle{m}=[draw=black, fill=gray!10, thick, minimum size=1.6em]
        \tikzstyle{every edge}=[draw,thick,font=\footnotesize]

        \tikzstyle{every label}=[font=\footnotesize]

        \tikzstyle{ev}=[anchor=center,node distance=3.8cm]

        \tikzstyle{wred}=[w,draw=black]

        %%
        %% nodes
        %%

\tikzstyle{labelAgente}=[sloped,font = \scriptsize]
%\node [model, minimum width=11em, minimum height = 5em] (w) at (-0.7,-0.5) {};
%\node [agente, label = {[labelAgente]right:${\bf b}$}] (b) {$B\neg p$};

        \node[w,label = {[labelAgente]above:${\bf s}$}, text width=1.6cm] (w1) {{\centerline{$HT$}} {\centerline{{\tiny{${\{b\}\prec \{a\}}$}}}}};

        \node[w,label = {[labelAgente]above:${\bf t}$},text width=1.6cm, right of=w1] (w2) {{\centerline{$HH$}} {\centerline{{\tiny{${\{a\}\not \preceq \{b\}}$}}}}};

        \node[w,label = {[labelAgente]above:${\bf u}$}, text width=1.6cm, right of=w2] (w3) {{\centerline{$TH$}} {\centerline{{\tiny{$ {\{a\}\prec \{b\}}$}}}}};

        %%
        %% edges
        %%

        \path (w1) edge[{Stealth[length=3mm]}-{Stealth[length=3mm]}] node[below,outer sep=-2pt]{${{a}}$} (w2);
       % \path (w1) edge[-,bend left=20] node[above,outer sep=-2pt]{${{e}}$} (w2);

        \path (w2) edge[{Stealth[length=3mm]}-{Stealth[length=3mm]}] node[below,outer sep=-2pt]{${{b}}$} (w3);
        %\path (w2) edge[-,bend left=20] node[above,outer sep=-2pt]{${{d}}$} (w3);

             \path (w1) edge[{Stealth[length=3mm]}-{Stealth[length=3mm]},bend right=20] node[below,outer sep=-2pt,]{${{c}}$} (w3);

      \end{tikzpicture}
      \end{center}
      \vspace{-0.2cm}
      \caption{The drawing represents the epistemic $S5$ model $M$ with three possible worlds, $s,t,u$, for the penny--quarter box in which agents are told that $TT$ is not possible. Except~for the reflexive loops, all other accessibility relations are~drawn.}
\label{Fig5.}
\end{figure}
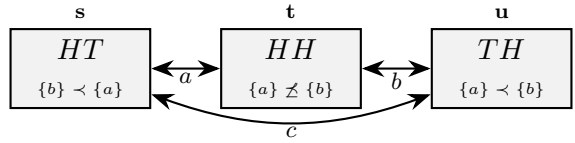

In the $S5$ model $M$ in Figure~\ref{Fig5.}, if~the real state of the box is $u$, then agent $a$ knows it (she knows the state of the first coin and she knows that $TT$ is not possible), but~$b$ does not know it and neither does $c$. In~particular, if~the state of the box is $u$, $a$ knows more than $b$, i.e.,~$u \models_{M} {a} \prec {b}$. In~the possible world $s$, $HT$ is true, here the epistemic powers are reversed as $b$ is epistemically leading over $a$, i.e.,~$s \models_{M} {b} \prec {a}$. If $HH$ is true, the~agents $a$ and $b$'s epistemic states are incomparable, i.e.,~$t\models_{M} {a} \perp {b}$.

\smallskip\par\noindent
{\bf{Analysis of Example 2.}} In the model of Figure~\ref{Fig2.}, observe that in world $s$, agent $b$ knows $\neg(TT)$ and, as such, $b$ knows more than agent $a$ at $s$, i.e.,~$s \models {b}\prec {a}$. In~world $u$, agent $a$ knows more than $b$, i.e.,~$u \models {a} \prec {b}$, while in the other worlds the agents are epistemically~equivalent.

\section{Reasoning About Epistemic~Comparison} \label{sec3}

In this section we will use the introduced models and in particular Proposition 1 to analyze the validity of a variety of epistemic statements about agents' epistemic comparative~position.

\subsection{Knowledge, Distributed Knowledge, and Common Knowledge About Epistemic Comparison in $S5$ Models}

First, let us check the $S5$-vality of the ``known superiority axiom'' in~its weak~form.

\begin{Proposition}\label{proposition2} In  $S5$ models, {\it if a group $B$ (collectively) knows at least as much as another group $C$, then $B$ (collectively) knows this fact}:
$$\models_{S5} B\preceq C \,\,\to \,\, K_B (B\preceq C)$$
\end{Proposition}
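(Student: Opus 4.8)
### Proof Proposal

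The plan is to unfold the semantics of the claim and exploit the relational properties of $S5$-models. Fix an $S5$-model $M = (S, R_a, \|\bullet\|)_{a\in\mathcal{A}}$ and a world $w$ with $w \models B \preceq C$; I want to show $w \models K_B(B\preceq C)$, i.e., that for every $t$ with $w R_B t$ we have $t \models B \preceq C$. So the core task is: assuming $w R_B t$, $w R_B s$, and (toward showing $B\preceq C$ at $t$) $t R_B u$, derive $t R_C u$. Here $R_B = \bigcap_{b\in B} R_b$ and similarly for $C$, and in an $S5$-model each $R_b$ — hence each of $R_B$, $R_C$ — is an equivalence relation.

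The key observation is that in an $S5$-model the relation $R_B \subseteq R_C$ \emph{globally on a given $R_B$-class}. More precisely: first I would note that $w R_B t$ together with symmetry and transitivity of $R_B$ means $w$ and $t$ lie in the same $R_B$-equivalence class. Now suppose $t R_B u$; then also $w R_B u$ (transitivity via $w R_B t R_B u$), and from $w \models B \preceq C$ we get $w R_C u$. Similarly $w R_B t$ gives $w R_C t$, hence $t R_C w$ by symmetry of $R_C$; combining $t R_C w$ and $w R_C u$ by transitivity of $R_C$ yields $t R_C u$. Since $u$ was an arbitrary $R_B$-successor of $t$, this establishes $t \models B \preceq C$, and since $t$ was an arbitrary $R_B$-successor of $w$, we conclude $w \models K_B(B\preceq C)$.

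The main point to get right is the bookkeeping of which equivalence-relation properties are invoked where: the argument uses transitivity of $R_B$ to transport the hypothesis $t R_B u$ back to a statement about $w$, and then uses \emph{symmetry and transitivity of $R_C$} to transport the conclusions $w R_C t$ and $w R_C u$ into the desired $t R_C u$. This is exactly where the proof breaks down in $S4$ (no symmetry of $R_C$, so one cannot turn $w R_C t$ into $t R_C w$) and in $KT$ (no transitivity either), which is why Known Superiority is $S5$-specific — I would remark on this, as the paper has flagged that the $S4$/$KT$ failure will be shown via counterexamples. No heavy computation is needed; the whole proof is a short diagram chase, so I do not expect a genuine obstacle beyond stating the relational facts cleanly.
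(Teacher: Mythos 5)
Your proof is correct and follows essentially the same argument as the paper's: transport the $R_B$-successors of $t$ back to the distinguished world via transitivity of $R_B$, apply the hypothesis $B\preceq C$ there, and then close the triangle in $R_C$. The only cosmetic difference is that the paper invokes Euclideanness of $R_C$ in one step where you decompose it into symmetry plus transitivity, which is equivalent since $R_C$ is an intersection of equivalence relations.
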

\begin{proof}
To show this, let us assume that we have $s \models B \preceq C$, at~some world $s$ of an $S5$ model $M$. We have to show that $s \models K_B (B\preceq C)$. For~this, let $w$ be any world such that $s R_B w$, and~we have to show that $w\models B\preceq C$. To~prove this, let now $t$ be any world with $w R_B t$, and~we have to show that $w R_C t$. To~prove this, we first note that $s R_B w$ and $w R_B t$ give us $s R_B t$ (by Transitivity). Second, from~$s R_B w$ and $s \models B \preceq C$ we derive that $s R_B w$ (by the semantics of $\preceq$), and~similarly from $s R_B t$ and $s \models B \preceq C$ we derive that $s R_B t$. Finally, $s R_C w$ and $s R_C t$ imply that $w R_C t$ (by the Euclideanness of $R_C$).   
\end{proof}

We continue by considering the scenario in which two groups are epistemically equally strong. Take for instance two groups, $B$ and $C$, who are epistemically equally strong in an epistemic $S5$ model, then each group knows it.
\begin{Proposition}\label{proposition3} In  $S5$ models, if~two groups $B$ and $C$ are epistemically (not, respectively) equivalent, then each group knows that this is (not, respectively) the case:
 $$\models_{S5} B \equiv C \, \to \, K_{B}(B \equiv C) \wedge K_{C}(B \equiv C) $$
 $$\models_{S5} B \not \equiv C \to K_B (B \not \equiv C) \wedge K_C (B \not \equiv C)$$ 
\end{Proposition}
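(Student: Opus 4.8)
The plan is to reduce Proposition~\ref{proposition3} to the already-proved Proposition~\ref{proposition2} (weak Known Superiority) together with the basic closure properties of distributed knowledge. First I would observe that $B\equiv C$ is by definition $(B\preceq C)\wedge(C\preceq B)$. Applying Proposition~\ref{proposition2} to each conjunct gives, over $S5$ models, $B\preceq C\to K_B(B\preceq C)$ and $C\preceq B\to K_B(C\preceq B)$. For the second of these I need Known Superiority in the form $C\preceq B\to K_C(C\preceq B)$, but then to get $K_B$ instead of $K_C$ I would use group monotonicity (the $KT$-validity $K_C\psi\to K_{B\cup C}\psi$, stated in the excerpt) together with Inclusion: actually the cleanest route is to apply Proposition~\ref{proposition2} directly with the roles swapped only where it already yields $K_B$, and otherwise note that $B\preceq C$ itself, via Knowledge Transfer, lets $B$ import $C$'s knowledge. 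Concretely: from $B\preceq C$ we get $K_B(B\preceq C)$ by Prop.~\ref{proposition2}; and from $C\preceq B$ we get $K_C(C\preceq B)$ by Prop.~\ref{proposition2} (swapping $B,C$), whence $B\preceq C\to(K_C(C\preceq B)\to K_B(C\preceq B))$ by Knowledge Transfer, so under the hypothesis $B\equiv C$ we obtain $K_B(C\preceq B)$. Combining, $K_B(B\preceq C)\wedge K_B(C\preceq B)$, and since $K_B$ distributes over conjunction (a consequence of $K_A$-Distribution and $K_A$-Necessitation), this is $K_B(B\equiv C)$. By the symmetric argument (swapping $B$ and $C$ throughout) we get $K_C(B\equiv C)$, proving the first displayed formula.

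For the second displayed formula I would argue contrapositively via Negative Introspection, which is available since we are in $S5$ models. By Observation~\ref{constructs}, $B\not\equiv C$ is equivalent to $(B\not\preceq C)\vee(C\not\preceq B)$ (indeed $B\not\equiv C\leftrightarrow\neg(B\preceq C)\vee\neg(C\preceq B)$ directly from the definition of $\equiv$). Suppose $s\models B\not\preceq C$. Then $s\models\neg K_B(B\preceq C)$: for if we had $s\models K_B(B\preceq C)$, Veracity would give $s\models B\preceq C$, a contradiction. Hence by Negative Introspection for the group modality $K_B$ (valid on $S5$ models, per Proposition~\ref{p-1}), $s\models K_B\neg K_B(B\preceq C)$. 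Now by Proposition~\ref{proposition2} the implication $B\preceq C\to K_B(B\preceq C)$ is $S5$-valid, hence so is its contrapositive $\neg K_B(B\preceq C)\to\neg(B\preceq C)$, i.e. $\neg K_B(B\preceq C)\to B\not\preceq C$; applying $K_B$-Necessitation and $K_B$-Distribution, $s\models K_B\neg K_B(B\preceq C)$ yields $s\models K_B(B\not\preceq C)$, and since $B\not\preceq C\to B\not\equiv C$ is $KT$-valid (again from the definition of $\equiv$), monotonicity of $K_B$ gives $s\models K_B(B\not\equiv C)$. The case $s\models C\not\preceq B$ is handled identically using the swapped instance of Proposition~\ref{proposition2} and Negative Introspection, and in both cases one also derives $K_C(B\not\equiv C)$ by the mirror-image argument (or by noting the whole derivation is symmetric in $B$ and $C$).

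I expect the main subtlety — not a deep obstacle, but the point requiring care — to be bookkeeping about \emph{which} group's knowledge operator appears: Proposition~\ref{proposition2} only delivers $K_B$ of $B\preceq C$ and $K_C$ of $C\preceq B$, so to collect both facts under a single operator $K_B$ (and separately under $K_C$) one must genuinely invoke Knowledge Transfer under the standing hypothesis $B\equiv C$, rather than hoping the monotonicity axiom alone suffices (it would give $K_{B\cup C}$, which is not what is claimed). A second point to state explicitly is that all the ``derived'' facts used — closure of $K_B$ under conjunction and under provable implication, and the Observation~\ref{constructs}/definitional equivalences for $\equiv$ and $\perp$ — are consequences of the $KT$ axioms in Table~\ref{tb0}, so nothing beyond $S5$ (plus the sound axioms of Proposition~\ref{p-1}) is needed. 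Everything else is routine.
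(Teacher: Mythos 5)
Your proof of the first displayed formula is correct and is essentially the paper's own argument: decompose $B\equiv C$ into $B\preceq C\wedge C\preceq B$, apply Proposition~\ref{proposition2} to each conjunct, and use Knowledge Transfer to move $K_C(C\preceq B)$ under $K_B$ (and symmetrically), then close under conjunction. Nothing to add there.

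The second displayed formula is where there is a genuine gap. You case-split on $B\not\preceq C$ versus $C\not\preceq B$; the Negative-Introspection argument in the first case yields $K_B(B\not\preceq C)$, hence $K_B(B\not\equiv C)$, and the "swapped" argument in the second case yields $K_C(C\not\preceq B)$, hence $K_C(B\not\equiv C)$. But the proposition demands \emph{both} $K_B(B\not\equiv C)$ and $K_C(B\not\equiv C)$ in \emph{each} case, and your appeal to "the mirror-image argument" does not supply the missing conjunct: the case hypothesis is not symmetric in $B$ and $C$, so mirroring the Case-1 derivation proves $C\not\preceq B\to K_C(\ldots)$, which is unavailable when only $B\not\preceq C$ holds. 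Indeed, $B\not\preceq C\to K_C(B\not\preceq C)$ is \emph{not} $S5$-valid (a three-world model with $R_b$ linking $s,t$ and $R_c$ linking $s,w$ refutes it), so in the sub-case $B\not\preceq C\wedge C\preceq B$ you cannot get $K_C(B\not\equiv C)$ by running your schema on the failing comparison; you would additionally need Knowledge Transfer from the surviving comparison $C\preceq B$ to import $K_B(B\not\preceq C)$ into $K_C$. The argument can be patched this way with a four-fold sub-case analysis, but the clean fix --- and the paper's actual route --- is to apply your own Negative-Introspection schema not to the component $B\preceq C$ but to the whole formula $\varphi=B\equiv C$: part one gives $\models_{S5}\varphi\to K_B\varphi$ and $\models_{S5}\varphi\to K_C\varphi$, and the standard $S5$ principle (if $\varphi\to K\varphi$ is valid then so is $\neg\varphi\to K\neg\varphi$, provable from Veracity, Negative Introspection, Necessitation and Distribution exactly as in your own derivation) then yields $\neg\varphi\to K_B\neg\varphi$ and $\neg\varphi\to K_C\neg\varphi$ with no case analysis at all.
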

\begin{proof}\label{proof2}
{{We start with the first statement. First, observe that ${\models_{S5} (B \equiv C)} \leftrightarrow {(B \preceq C \wedge C \preceq B)}$. We use the $S5$-validity of the known superiority axiom in Proposition \ref{proposition2} for both parts of the conjunction, i.e.,~${\models_{S5} (B\preceq C)} \to {K_B(B \preceq C)}$ and ${\models_{S5} (C \preceq B)} \to {K_C (C \preceq B)}$. Hence, ${\models_{S5} (B \equiv C)} \to (K_B (B \preceq C) \wedge K_C (C \preceq B))$. By~the Knowledge Transfer axiom and propositional logic, we then also have ${\models_{S5} (B \equiv C)} \to (K_C (B \preceq C)) \wedge (K_B (C \preceq B))$.
Putting both together, we have
${\models_{S5} B \equiv C} \to {K_B ((C \preceq B) \wedge (B \preceq C)) \wedge K_C ((C \preceq B) \wedge (B \preceq C))}$ and hence $\models_{S5} B \equiv C \, \to \, K_{B}(B \equiv C) \wedge K_{C}(B \equiv C)$.
\par\noindent
To prove the second statement, we use the theorem in $S5$ which states that whenever $\varphi \to K \varphi$ is provable then so is $\neg \varphi \to K \neg \varphi$, together with the proof of the first statement in this proposition.
}}
\end{proof}

We use Proposition \ref{proposition3} to show that the strict version of the statement in Proposition \ref{proposition2} is also $S5$-valid:
\begin{Proposition}\label{proposition4}
In  $S5$ models, {\it a more-expert group (collectively) knows its own
epistemic position over a less-expert group}: 
$$\models_{S5} B\prec C \,\, \to \,\, K_B (B \prec C)$$
\end{Proposition}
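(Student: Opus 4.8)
The plan is to unfold the abbreviation $B\prec C := (B\preceq C)\wedge(C\not\preceq B)$ and prove $K_B$ of each conjunct separately, using the two results just established. Since $K_B$ is a normal modality (it satisfies $K_B$-Necessitation and $K_B$-Distribution, and hence distributes over conjunction in both directions), it suffices to show $\models_{S5} B\prec C \to K_B(B\preceq C)$ and $\models_{S5} B\prec C \to K_B(C\not\preceq B)$, and then recombine the two.

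For the first conjunct, note that $B\prec C$ implies $B\preceq C$, and Proposition~\ref{proposition2} gives $\models_{S5} B\preceq C\to K_B(B\preceq C)$; hence $\models_{S5}B\prec C\to K_B(B\preceq C)$. For the second conjunct, observe that $B\prec C$ also implies $C\not\preceq B$, which together with $B\preceq C$ yields $B\not\equiv C$ (since, by definition, $B\equiv C$ is $(B\preceq C)\wedge(C\preceq B)$, so $B\not\equiv C$ is $(B\not\preceq C)\vee(C\not\preceq B)$). By the second statement of Proposition~\ref{proposition3} we get $\models_{S5}B\not\equiv C\to K_B(B\not\equiv C)$, so at any world satisfying $B\prec C$ we have both $K_B(B\preceq C)$ (from the previous sentence) and $K_B(B\not\equiv C)$. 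Now $\bigl((B\preceq C)\wedge(B\not\equiv C)\bigr)\to(C\not\preceq B)$ is a propositional tautology (treating $B\preceq C$ and $C\preceq B$ as atoms), so $K_B$-Necessitation and $K_B$-Distribution give $K_B\bigl((B\preceq C)\wedge(B\not\equiv C)\bigr)\to K_B(C\not\preceq B)$; as $K_B$ distributes over conjunction the antecedent holds, and therefore $K_B(C\not\preceq B)$.

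Finally, combining $K_B(B\preceq C)$ and $K_B(C\not\preceq B)$ and using that $K_B$ distributes over conjunction, we obtain $K_B\bigl((B\preceq C)\wedge(C\not\preceq B)\bigr)$, i.e.\ $K_B(B\prec C)$, as desired.

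I do not expect a serious obstacle: the only mildly delicate point is ensuring that the step "$B\not\equiv C$ together with $B\preceq C$ forces $C\not\preceq B$" is carried out \emph{inside} the scope of $K_B$, which is precisely what normality of $K_B$ licenses; everything else is bookkeeping with Propositions~\ref{proposition2} and~\ref{proposition3}. (Alternatively, one could give a fully semantic argument mirroring the proof of Proposition~\ref{proposition2}, tracking a world $w$ with $sR_Bw$ and converting a failure of $C\preceq B$ at $s$ into a failure at $w$ via Euclideanness of $R_C$; but routing through the earlier propositions is shorter.)
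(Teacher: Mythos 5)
Your proof is correct and follows essentially the same route as the paper's: both reduce $B\prec C$ to knowing the two conjuncts separately, apply Proposition~\ref{proposition2} to obtain $K_B(B\preceq C)$ and the second statement of Proposition~\ref{proposition3} to obtain $K_B(B\not\equiv C)$, and then recombine under $K_B$ using normality. The only (cosmetic) difference is that you carry out the propositional conversion between $B\not\equiv C$ and $C\not\preceq B$ explicitly inside the scope of $K_B$ via Necessitation and Distribution, a step the paper leaves implicit.
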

\begin{proof}
We use the valid equivalence $(B \prec C) \leftrightarrow (B \preceq C \wedge B \not \equiv C)$. Hence we have to show that $(B \preceq C \wedge B \not \equiv C) \to K_B (B \preceq C \wedge B \not \equiv C)$. Using Proposition \ref{proposition2}, we derive from the premise that $K_B (B\preceq C)$, i.e.,~the first conjunct in the conclusion. For~the second conjunct, we use the second statement in Proposition \ref{proposition3}, i.e.,
$\models_{S5} B \not \equiv C \to K_B (B \not \equiv C) \wedge K_C (B \not \equiv C)$. Together this yields $\models_{S5} B\prec C \to K_B (B \prec C)$.
\end{proof}

What if the groups are epistemically incomparable? It turns out that incomparability can be opaque to all the participants!
\begin{Observation}\label{observation3}
In $S5$ models, epistemic incomparability between two groups is not necessarily known by any of the two:
$$\not \models_{S5} B\perp C \,\,\to \,\, (K_B (B\perp C) \vee K_C (B \perp C))$$
\end{Observation}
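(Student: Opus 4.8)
The statement is a \emph{non}-validity, so the plan is simply to produce one $S5$ model $M$ together with a world $t$ in it at which $B\perp C$ holds but neither $K_B(B\perp C)$ nor $K_C(B\perp C)$ does; since singletons are groups, it suffices to do this for two individual agents, taking $B=\{a\}$ and $C=\{b\}$. Recall that in an $S5$ model each $R_a$ is an equivalence relation, whose classes are agent $a$'s ``information cells'', and that by the semantics of $\preceq$ we have $w\models a\preceq b$ iff the $R_a$-cell of $w$ is included in the $R_b$-cell of $w$; hence $w\models a\perp b$ iff the two cells \emph{cross} (neither includes the other). So what I need is a world $t$ where the $R_a$- and $R_b$-cells cross, but where in addition (i) $a$ considers possible some world $s$ at which $b\preceq a$ (so $s\not\models a\perp b$, killing $K_a(a\perp b)$ at $t$), and (ii) $b$ considers possible some world $u$ at which $a\preceq b$ (so $u\not\models a\perp b$, killing $K_b(a\perp b)$ at $t$).

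The model of Example~3 (Figure~\ref{Fig5.}) already has exactly this shape, so no new construction is needed. There $a$ is told only the state of the penny, so $a$'s cell at the world $t$ (where $HH$ holds) is $\{s,t\}$; and $b$ is told only the state of the quarter, so $b$'s cell at $t$ is $\{t,u\}$. These cross, which is why $t\models a\perp b$ (as already recorded in the analysis of Example~3). Moreover $s\mathrel{R_a}t$ and $t\mathrel{R_b}u$.

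To finish I would verify the two ``ignorance'' conjuncts directly from the cell structure: at $s$ (where $HT$ holds) the $R_b$-cell $\{s\}$ is included in the $R_a$-cell $\{s,t\}$, so $s\models b\preceq a$ and hence $s\not\models a\perp b$; since $t\mathrel{R_a}s$, this gives $t\models\neg K_a(a\perp b)$. Symmetrically, at $u$ (where $TH$ holds) the $R_a$-cell $\{u\}$ is included in the $R_b$-cell $\{t,u\}$, so $u\models a\preceq b$ and hence $u\not\models a\perp b$; since $t\mathrel{R_b}u$, this gives $t\models\neg K_b(a\perp b)$. Combining with $t\models a\perp b$ yields $t\models a\perp b\wedge\neg K_a(a\perp b)\wedge\neg K_b(a\perp b)$, i.e.\ $t\not\models a\perp b\to(K_a(a\perp b)\vee K_b(a\perp b))$, so the formula is not $S5$-valid; taking $B=\{a\}$ and $C=\{b\}$ establishes the Observation.

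There is no real obstacle here: the only thing worth double-checking is the bookkeeping that Figure~\ref{Fig5.} is genuinely an $S5$ model (all reflexive loops present, and the drawn relations closed under symmetry and transitivity, which holds since each agent's drawn edges form a single link) and that the three comparison facts at $s,t,u$ follow from the information cells rather than merely from the informal discussion in the text. Everything else is a one-line unfolding of the semantics. (If a self-contained counterexample were preferred, the same effect is obtained by any three-world $S5$ model in which $R_a$ has classes $\{1,2\},\{3\}$ and $R_b$ has classes $\{1,3\},\{2\}$, evaluated at world $1$.)
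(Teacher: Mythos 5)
Your proposal is correct and matches the paper's own proof, which likewise refutes the validity by pointing to the model of Figure~\ref{Fig5.} at world $t$, where $t\models a\perp b$ but $t\models\neg K_a(a\perp b)\wedge\neg K_b(a\perp b)$. The only difference is that you spell out the cell-inclusion verification at $s$, $t$, and $u$, which the paper leaves implicit; your bookkeeping is accurate.
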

\begin{proof}
Figure~\ref{Fig5.} provides an $S5$ model $M$ in which $t \models {a} \perp {b}$ while  $t \models \neg K_a ({a} \perp {b}) \wedge$ ${\neg K_b ({a} \perp {b})}$.
\end{proof}

We go on to show that the less-expert group does not necessarily know that another group is the more-expert group in epistemic $S5$ models. This sounds reasonable because several scenarios can be thought of where that is indeed the case. For~instance, in~the case of one's data being hacked, typically the victim does not know it but can find out that someone gained access to all of their information.  This can be expressed formally in Observation \ref{observation4}.
\begin{Observation}\label{observation4} In $S5$ models, a~less-expert group $C$ (collectively) does not necessarily know its own epistemic position with respect to a more-expert group $B$. This is also the case when $B$ knows at least as much as $C$.
$$ \not \models_{S5} B\prec C \,\, \to \,\,  K_C (B\prec C)$$
$$ \not \models_{S5} B\preceq C \,\, \to \,\,  K_C (B\preceq C)$$
\end{Observation}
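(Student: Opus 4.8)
The plan is to refute both implications at once with a single countermodel, exploiting the fact that $B \prec C$ logically entails $B \preceq C$ (recall $B \prec C := B \preceq C \wedge C \not\preceq B$). Both displayed formulas have the shape $\varphi \to K_C \varphi$ with $\varphi$ a comparison statement, so it suffices to produce an $S5$ model $M$, a world $w$, and groups $B,C$ with $w \models_M B \prec C$ (hence also $w \models_M B \preceq C$) while $w$ has some $R_C$-accessible world $w'$ at which $B \preceq C$ fails; then $B \prec C$ fails at $w'$ as well, so both $K_C(B\preceq C)$ and $K_C(B\prec C)$ fail at $w$.

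\textbf{The model and the superiority.} First I would reuse the three-world $S5$ model $M$ of Figure~\ref{Fig5.}, taking $B = \{a\}$, $C = \{b\}$, and evaluating at the world $u$. Recording the partitions: $R_a$ has classes $\{s,t\}$ and $\{u\}$; $R_b$ has classes $\{s\}$ and $\{t,u\}$; $R_c$ has classes $\{t\}$ and $\{s,u\}$. At $u$ the only $a$-alternative is $u$ itself, whereas the $b$-alternatives are $t$ and $u$; hence every $a$-alternative of $u$ is a $b$-alternative, giving $u \models a \preceq b$, while $t$ is a $b$-alternative of $u$ that is not an $a$-alternative, giving $u \models b \not\preceq a$. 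Thus $u \models a \prec b$, as already observed for the model of Figure~\ref{Fig5.}.

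\textbf{Failure of knowledge.} Next I would show that $b$ does not know this at $u$. The world $t$ is an $R_b$-successor of $u$, and at $t$ the $a$-alternatives are $s,t$ whereas the $b$-alternatives are $t,u$. Since $s$ is an $a$-alternative of $t$ but not a $b$-alternative of $t$, we get $t \models a \not\preceq b$ (in fact $t \models a \perp b$, exactly as used in the proof of Observation~\ref{observation3}). Therefore $t \not\models a \preceq b$ and $t \not\models a \prec b$, so $u \not\models K_b(a\preceq b)$ and $u \not\models K_b(a\prec b)$. Instantiating $B=\{a\}$, $C=\{b\}$, this refutes both implications, and since the same world $u$ serves for both, one model suffices.

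\textbf{Where the (mild) difficulty lies.} There is no genuinely hard step; the only point requiring care is that $A\preceq B$ is interpreted \emph{locally}, so the comparison between $\{a\}$ and $\{b\}$ is not constant across $M$ — it is precisely because it flips from ``$a$ strictly above $b$'' at $u$ to ``$a$ and $b$ incomparable'' at the $b$-accessible world $t$ that knowledge of the superiority is lost. If a self-contained construction is preferred over a reference to Figure~\ref{Fig5.}, the identical argument goes through with any two-agent $S5$ model on worlds $w_1,w_2,w_3$ whose $R_a$-classes are $\{w_1,w_2\},\{w_3\}$ and whose $R_b$-classes are $\{w_1\},\{w_2,w_3\}$, evaluated at $w_3$.
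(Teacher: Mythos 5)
Your proof is correct and uses essentially the same countermodel as the paper: the three-world $S5$ model of Figure~\ref{Fig5.}, evaluated at $u$ with $B=\{a\}$, $C=\{b\}$, where $u \models a \prec b$ but the $R_b$-accessible world $t$ satisfies $a \not\preceq b$. The only (harmless) difference is that you refute the second implication directly from the same witness world $t$, whereas the paper reduces it to the first item via the equivalence $(B \preceq C) \leftrightarrow ((B \prec C) \vee (B \equiv C))$ and Proposition~\ref{proposition3}.
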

\begin{proof}
To prove the first item, Figure~\ref{Fig5.} provides a model $M$ in which $u\models_{M} {a} \prec {b}$ and $u \models_{M}K_a({a}\prec {b})$ while $u \not \models_{M}K_b ({a} \prec {b})$. This shows that the first statement is not valid in $S5$ models.

\par\noindent
{To prove the second item, we use the valid equivalence} $(B \preceq C) \leftrightarrow {((B \prec C) \vee (B \equiv C))}$. In case $B \equiv C$ then Proposition \ref{proposition3} shows that $K_C(B \preceq C)$ does follow, hence the only option that remains is $B \prec C$. To~show that it does not imply $K_C (B \preceq C)$, we refer back to the proof of the first statement in this observation.
\end{proof}

However, a~less-expert team $C$ can know something about its own position relative to $B$, at~least in case they do {\it not} know at least as much as $B$, i.e.,
\begin{Proposition}\label{rp1} In $S5$ models, if~it is not the case that a group $C$ (collectively) knows at least as much as another group $B$, then $C$ (collectively) know this fact:
$$ \models_{S5} C \not \preceq B \,\, \to \,\,  K_C (C \not \preceq B)$$
\end{Proposition}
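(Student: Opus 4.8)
The plan is to derive this directly from the Known Superiority axiom (group~(VI) in Table~\ref{tb0}), equivalently from Proposition~\ref{proposition2}, using the very same $S5$ metatheorem that was invoked for the second statement of Proposition~\ref{proposition3}: for any formula $\varphi$, if $\varphi \to K_C\varphi$ is $S5$-valid then so is $\neg\varphi \to K_C\neg\varphi$ (this follows from Veracity, Negative Introspection, and $K_C$-Distribution/Necessitation, with no reference to the comparison operator). First I would instantiate Known Superiority with the group inside the modality being $C$ and the compared group being $B$, obtaining $\models_{S5} C\preceq B \to K_C(C\preceq B)$. Then I would apply the metatheorem with $\varphi := (C\preceq B)$, which immediately gives $\models_{S5} \neg(C\preceq B)\to K_C\neg(C\preceq B)$, i.e.\ $\models_{S5} C\not\preceq B \to K_C(C\not\preceq B)$. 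That is the whole argument.

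To double-check it (and to expose where $S5$ is essential), I would also verify the claim semantically. Fix an $S5$ model $M$ and a world $w$ with $w\models C\not\preceq B$; the goal is $w\models K_C(C\not\preceq B)$. Let $v$ be an arbitrary world with $wR_C v$; it suffices to show $v\models C\not\preceq B$. Suppose instead $v\models C\preceq B$, i.e.\ $R_C[v]\subseteq R_B[v]$, writing $R_X[x]$ for $\{y : x R_X y\}$. From $wR_C v$ and symmetry of $R_C$ we get $vR_C w$, hence $w\in R_C[v]\subseteq R_B[v]$, i.e.\ $vR_B w$. Since in an $S5$ model both $R_C=\bigcap_{c\in C}R_c$ and $R_B=\bigcap_{b\in B}R_b$ are equivalence relations, $vR_B w$ gives $R_B[v]=R_B[w]$ and $vR_C w$ gives $R_C[v]=R_C[w]$; therefore $R_C[w]=R_C[v]\subseteq R_B[v]=R_B[w]$, i.e.\ $w\models C\preceq B$, contradicting the hypothesis. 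Hence $v\models C\not\preceq B$ for every $R_C$-successor $v$ of $w$, so $w\models K_C(C\not\preceq B)$.

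The main point to be careful about is not so much an obstacle as a subtlety: the step ``$vR_B w$ implies $R_B[v]=R_B[w]$'' uses symmetry (equivalently Euclideanness, i.e.\ Negative Introspection) of $R_B$, exactly the property that is missing in $S4$ and that already makes Known Superiority (Proposition~\ref{proposition2}) fail there. So the statement is genuinely $S5$-specific, and the cleanest exposition is the two-line reduction to Proposition~\ref{proposition2} via the metatheorem $(\varphi\to K_C\varphi) \Rightarrow (\neg\varphi\to K_C\neg\varphi)$; the semantic computation is worth keeping only as a sanity check and as the place where the reliance on $S5$ becomes visible.
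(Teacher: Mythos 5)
Your proposal is correct and follows essentially the same route as the paper: the paper's proof is exactly the two-line reduction you give first, citing Proposition~\ref{proposition2} (Known Superiority) together with the $S5$ metatheorem that $\varphi \to K\varphi$ yields $\neg\varphi \to K\neg\varphi$. The additional semantic verification is a sound bonus check but is not needed for the argument.
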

\begin{proof}
{{This follows from Proposition \ref{proposition2} stating that in $S5$ models, ``Known Superiority'' is valid ${\models_{S5} C \preceq B} \, \to \, {K_C(C \preceq B)}$, together with the theorem in $S5$ which states that whenever $\varphi \to K \varphi$ is provable then so is $\neg \varphi \to K \neg \varphi$.}}
\end{proof}

We now reflect on whether the group's epistemic comparison status is also known by the bigger group, consisting of both the more-expert and less-expert groups. Indeed, if~$B$ is the more-expert group over a group $C$, then, when $B$ collectively knows a proposition $\varphi$, so does the bigger group $B \cup C$ that $B$ is a part of. This leads to the following:
\begin{Proposition}\label{proposition7} In $S5$ models, if~a group $B$ (collectively) knows at least as much, more, or~the same (respectively) as another group $C$, then the bigger group consisting of both $B$ and $C$ knows this fact:
$$\models_{S5} B \preceq C \, \to \, K_{B \cup C}(B \preceq C)$$
$$\models_{S5} B \equiv C \, \to \, K_{B \cup C}(B \equiv C)$$
$$\models_{S5} B \prec C \, \to \, K_{B \cup C}(B \prec C)$$
\end{Proposition}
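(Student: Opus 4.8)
The plan is to derive all three statements of Proposition~\ref{proposition7} from the already-established Propositions~\ref{proposition2}, \ref{proposition3}, and~\ref{proposition4}, together with the monotonicity of distributed knowledge under group enlargement (``group monotonicity'', $\models_{KT} K_B\varphi \to K_{B\cup C}\varphi$, noted as provable in the excerpt). The key observation is that each of the earlier propositions already gives us that the \emph{smaller} group $B$ knows the relevant comparison fact, and group monotonicity lets us push that knowledge up to any larger group containing $B$---in particular to $B\cup C$.

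First I would prove the first statement. Assume $s\models_M B\preceq C$ in an $S5$ model $M$. By Proposition~\ref{proposition2} we get $s\models K_B(B\preceq C)$. Now apply group monotonicity with the larger group $B\cup C$ (note $B\subseteq B\cup C$): from $K_B(B\preceq C)$ we infer $K_{B\cup C}(B\preceq C)$, which is exactly $s\models K_{B\cup C}(B\preceq C)$. For the second statement, assume $s\models B\equiv C$. By the first statement of Proposition~\ref{proposition3} we have $s\models K_B(B\equiv C)$ (in fact also $K_C(B\equiv C)$, but $K_B$ suffices); applying group monotonicity to move from $B$ up to $B\cup C$ yields $s\models K_{B\cup C}(B\equiv C)$. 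For the third statement, assume $s\models B\prec C$; by Proposition~\ref{proposition4} we have $s\models K_B(B\prec C)$, and group monotonicity again lifts this to $s\models K_{B\cup C}(B\prec C)$.

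I would present this compactly as a single short argument: ``In each case, the smaller group $B$ already knows the stated comparison fact---by Proposition~\ref{proposition2}, Proposition~\ref{proposition3}, and Proposition~\ref{proposition4} respectively---and since $B\subseteq B\cup C$, group monotonicity ($\models_{KT} K_B\psi\to K_{B\cup C}\psi$, provable from the axioms) transfers this knowledge to the larger group $B\cup C$.'' The only point requiring a little care is that group monotonicity is stated as a $KT$-validity, hence holds a fortiori on $S5$ models, so its use here is legitimate; beyond that the argument is entirely routine. I do not anticipate a genuine obstacle---the substantive work was already done in establishing that $B$ itself knows its comparative status in $S5$ models, and the present proposition is essentially a corollary obtained by enlarging the knowing group.
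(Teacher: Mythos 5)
Your proof is correct and matches the paper's: the first validity is obtained exactly as you do, from Proposition~\ref{proposition2} together with group monotonicity, and the paper then dismisses the second and third as following "as special cases," which your explicit appeals to Propositions~\ref{proposition3} and~\ref{proposition4} plus group monotonicity simply spell out.
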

\begin{proof}
To show the first claim, we make use of Proposition \ref{proposition2} stating the validity of ``Known Superiority'' ${\models_{S5} B \preceq C \to K_B (B \preceq C)}$ and the ``Group Monotonicity for Distributed knowledge'', i.e., ${\models_{S5} K_B \varphi \to K_{B\cup C} \varphi}$. Putting both together and using propositional logic, we obtain ${\models_{S5} B\preceq C}$ $\to K_{B \cup C} (B \preceq C)$. The second and third validity follow as special cases.
\end{proof}

Moreover, if the two groups are not epistemically equivalent, this fact is also distributed knowledge in the bigger group consisting of both, i.e.:
\begin{Proposition}\label{q0}  In $S5$ models, if~it is not the case that a group $B$ is epistemically equivalent to another group $C$, then the bigger group composed of $B$ and $C$ knows this fact:
$$\models_{S5} B \not \equiv C \, \to \, K_{B \cup C}(B \not \equiv C)$$
\end{Proposition}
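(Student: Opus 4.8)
The plan is to derive this as a short corollary of the machinery already in place, chiefly the second statement of Proposition~\ref{proposition3} together with group monotonicity. First I would unfold the abbreviation: $B\not\equiv C$ is by definition $(B\not\preceq C)\vee(C\not\preceq B)$. The key point is that Proposition~\ref{proposition3} already tells us, in $S5$ models, that $\models_{S5} B\not\equiv C \to K_B(B\not\equiv C)\wedge K_C(B\not\equiv C)$; so each of the two component groups \emph{individually} knows the non-equivalence. Since the bigger group $B\cup C$ has (at least) all the distributed knowledge of $B$, the result follows by one application of group monotonicity, $\models_{KT} K_B\psi\to K_{B\cup C}\psi$ (which is $S5$-valid as well), instantiated with $\psi := (B\not\equiv C)$. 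Chaining these two steps gives $\models_{S5} B\not\equiv C\to K_{B\cup C}(B\not\equiv C)$.

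Alternatively, if one prefers a proof that does not route through Proposition~\ref{proposition3}, I would argue by cases on the three ways in which $B\not\equiv C$ can hold (using Observation~\ref{constructs}): either $B\prec C$, or $C\prec B$, or $B\perp C$. In the first case, the third statement of Proposition~\ref{proposition7} gives $K_{B\cup C}(B\prec C)$, and since $B\prec C$ provably implies $B\not\equiv C$, closure of $K_{B\cup C}$ under provable implication (necessitation together with the $K_{B\cup C}$-distribution axiom) yields $K_{B\cup C}(B\not\equiv C)$; the second case is symmetric after swapping $B$ and $C$. In the third case, $B\perp C$ gives $B\not\preceq C$, whence Proposition~\ref{rp1} (applied with the roles of the two groups exchanged) yields $K_B(B\not\preceq C)$, hence $K_B(B\not\equiv C)$, and then group monotonicity again lifts this to $K_{B\cup C}(B\not\equiv C)$.

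I do not expect a genuine obstacle here: the statement is essentially a repackaging of earlier results. The only point deserving care is making explicit where the $S5$ assumption is actually used — it enters through Proposition~\ref{proposition3} (equivalently, through Proposition~\ref{rp1}), which ultimately rests on the Known Superiority / Negative Introspection axioms valid only on $S5$ models; group monotonicity and closure of $K_{B\cup C}$ under provable implication are already available on $KT$ models. So the cleanest writeup is the first one: cite the second statement of Proposition~\ref{proposition3}, then apply group monotonicity.
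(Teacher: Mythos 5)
Your primary argument is correct, and it reaches the conclusion by a route that is a mirror image of the paper's rather than the same one. The paper derives the claim from Proposition~\ref{proposition7} (the \emph{positive} statement $\models_{S5} B \equiv C \to K_{B\cup C}(B\equiv C)$, which already lives at the level of the big group) and then applies the $S5$ theorem that provability of $\varphi\to K\varphi$ yields provability of $\neg\varphi\to K\neg\varphi$, here for the modality $K_{B\cup C}$. You instead start from the already-negated second statement of Proposition~\ref{proposition3}, $\models_{S5} B\not\equiv C \to K_B(B\not\equiv C)$, and lift it to the big group by group monotonicity. So the two proofs apply the same two ingredients --- negation-closure and monotonic lifting to $B\cup C$ --- in opposite orders: the paper lifts first and negates second, you negate first (inside Proposition~\ref{proposition3}) and lift second. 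Both are one-line corollaries of earlier results and both locate the $S5$ assumption in the same place (Known Superiority / the negation-closure theorem), as you correctly observe; your version is arguably slightly more economical since Proposition~\ref{proposition3} already hands you the negated form for each component group. Your alternative case-split argument via Observation~\ref{constructs}, Proposition~\ref{proposition7}, and Proposition~\ref{rp1} is also sound but longer than necessary.
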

\begin{proof}
{{This claim follows from the validities in Proposition \ref{proposition7} together with the following theorem in $S5$ which states that whenever $\varphi \to K\varphi$ is provable then so is $\neg \varphi \to K \neg \varphi$.}}
\end{proof}

\smallskip\par\noindent
{\bf Common Knowledge about Epistemic Comparison}.\footnote{The propositions with epistemic comparison in this section that involve common knowledge are restricted to comparison  statements between singleton  groups of  agents. A generalization to epistemic comparison statements between arbitrary groups of agents can be obtained but requires the use of the notion of common distributed knowledge. We comment on this in the conclusion of this paper.} If it happens to be the case that a less-expert agent does know that another agent knows at least as much, then this fact is common knowledge among the group of both agents in $S4$ (and $S5$) models. 

\begin{Proposition}\label{p17} In $S4$-models, if~agent $c$ knows that another agent $b$ knows at least as much as her, then this fact is common knowledge between the two of them: 
$$\models_{S4} K_c({b} \preceq {c}) \to K^{\{b,c\}}({b} \preceq {c})$$
\end{Proposition}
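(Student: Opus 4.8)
Fix an $S4$-model $M$ and a world $s$ with $s\models K_c(b\preceq c)$; we must show $s\models K^{\{b,c\}}(b\preceq c)$. Write $R_x[w]:=\{t\mid wR_xt\}$, and recall that $w\models b\preceq c$ says exactly $R_b[w]\subseteq R_c[w]$, and that $K^{\{b,c\}}$ is the Kripke box for $R^{\{b,c\}}=(R_b\cup R_c)^{\ast}$. Hence it suffices to exhibit a set $X\subseteq S$ such that (i) $s\in X$, (ii) $X$ is closed under $R_b$ and under $R_c$ (and therefore under their union, and under its reflexive–transitive closure $R^{\{b,c\}}$), and (iii) $w\models b\preceq c$ for every $w\in X$: then every world $R^{\{b,c\}}$-accessible from $s$ lies in $X$, so satisfies $b\preceq c$, giving $s\models K^{\{b,c\}}(b\preceq c)$. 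The plan is to take
\[
X \;:=\; \{\, w\in S \mid w\models K_b(b\preceq c)\wedge K_c(b\preceq c)\,\}.
\]

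First I would check $s\in X$: by Veracity, $s\models K_c(b\preceq c)$ gives $s\models b\preceq c$, i.e.\ $R_b[s]\subseteq R_c[s]$; since $K_c(b\preceq c)$ says that $b\preceq c$ holds at every world of $R_c[s]$, it holds at every world of $R_b[s]$, i.e.\ $s\models K_b(b\preceq c)$ (equivalently, apply Knowledge Transfer, valid since $b\preceq c$ holds at $s$). Property (iii) is immediate from Veracity. The heart of the argument is the closure claim, which I would get from the auxiliary fact: \emph{if $w\models K_c(b\preceq c)$ and $wR_cu$, then $u\in X$}. Indeed: (1) $u\models b\preceq c$ directly, as $u\in R_c[w]$; (2) for any $t$ with $uR_ct$, transitivity of $R_c$ gives $wR_ct$, so $t\models b\preceq c$, whence $u\models K_c(b\preceq c)$; (3) for any $u'$ with $uR_bu'$, $u\models b\preceq c$ gives $R_b[u]\subseteq R_c[u]$, so $uR_cu'$, hence $wR_cu'$ by transitivity, so $u'\models b\preceq c$, whence $u\models K_b(b\preceq c)$. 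Thus $u\in X$. Closure of $X$ under $R_c$ is then the special case $w\in X$ (so $w\models K_c(b\preceq c)$), and closure under $R_b$ reduces to it: if $w\in X$ then $w\models b\preceq c$ by Veracity, so $R_b[w]\subseteq R_c[w]$, so any $R_b$-step out of $w$ is also an $R_c$-step, landing in $X$. This gives (i)--(iii) and hence the statement.

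The main obstacle — and the conceptual point — is step (3) together with the reduction of $R_b$-closure to $R_c$-closure: propagating the information along a path mixing $R_b$- and $R_c$-edges only works because wherever $b\preceq c$ holds (which, by Veracity, is everywhere on $X$) every $R_b$-edge is ``simulated'' by an $R_c$-edge, so the whole $(R_b\cup R_c)$-reachable region stays inside the $R_c$-descendants of worlds of $X$, where transitivity of $R_c$ finally lets $K_c(b\preceq c)$ survive. This is where $S4$ is genuinely used (in a mere $KT$-model $K_c(b\preceq c)$ need not survive even an $R_c$-chain of length two), and it is why the full ``Known Superiority'' axiom — which fails in $S4$ — is not needed. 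If a syntactic presentation is preferred, the same content repackages as: prove the $S4$-validity of $\varphi\to(K_b\varphi\wedge K_c\varphi)$ for $\varphi:=K_b(b\preceq c)\wedge K_c(b\preceq c)$, apply $K^A$-Necessitation and then the $K^A$-Induction axiom at $A=\{b,c\}$ to obtain $\varphi\to K^{\{b,c\}}\varphi$, and finish using Veracity ($\varphi\to b\preceq c$), monotonicity of $K^{\{b,c\}}$, and Knowledge Transfer ($K_c(b\preceq c)\to\varphi$, since $b\preceq c$ holds wherever $K_c(b\preceq c)$ does).
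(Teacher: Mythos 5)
Your proof is correct and is in substance the same argument as the paper's: the paper derives $K_c(b\preceq c)\to K_cK_c(b\preceq c)\wedge K_bK_c(b\preceq c)$ from axiom $4$, $T$, and Knowledge Transfer, then applies $K^A$-Necessitation and the $K^A$-Induction axiom, which is exactly the syntactic repackaging you sketch at the end (the paper just uses the invariant $K_c(b\preceq c)$ where you use $K_b(b\preceq c)\wedge K_c(b\preceq c)$). Your semantic presentation via the $R_b$- and $R_c$-closed set $X$ is simply the model-theoretic counterpart of that induction step, and your verification of where transitivity is needed is accurate.
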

\begin{proof}
By axiom $4$, we have $\models_{S4} K_c({b} \preceq {c}) \to {K_c K_c({b} \preceq {c})}$ (1). 
 On~the other hand, using axiom $T$ and the ``Knowledge Transfer'' axiom, we obtain ${\models_{S4} K_c({b} \preceq {c}) \to}{K_b K_c ({b} \preceq {c}))}$ {(2)}. Using {(1) and (2)} and propositional logic, we get that $\models_{S4} K_c({b} \preceq {c}) \to (K_c K_c({b} \preceq {c}) \wedge K_b K_c ({b} \preceq {c}))$, then we apply Necessitation to obtain $\models_{S4} K^{\{b,c\}} (K_c({b} \preceq {c}) \to (K_c K_c({b} \preceq {c}) \wedge K_b K_c ({b} \preceq {c})))$. Finally, applying the Induction axiom yields the desired conclusion.
\end{proof}

We continue by considering the scenario in which two agents $b$ and $c$ are epistemically equally strong. 
\begin{Proposition}\label{proposition9b} In $S5$ models, if~agents $b$ and $c$ are epistemically equivalent, then this fact is  common knowledge in the group consisting of $b$ and $c$:
 $$\models_{S5} {b} \equiv {c} \, \to \, K^{\{b,c\}}({b} \equiv {c})$$
\end{Proposition}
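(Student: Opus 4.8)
The plan is to mimic the structure of the proof of Proposition~\ref{p17}, but now derive the two ``halves'' of the equivalence fact $b\equiv c$ from $S5$ reasoning. First I would unfold $b\equiv c$ as $b\preceq c \wedge c\preceq b$. Applying Proposition~\ref{proposition2} (Known Superiority) to each conjunct, from $b\equiv c$ I get both $K_b(b\preceq c)$ and $K_c(c\preceq b)$; by the Knowledge Transfer axiom these also yield $K_c(b\preceq c)$ and $K_b(c\preceq b)$. Combining with Positive Introspection (axiom~$4$, available in $S5$), I obtain $K_b K_b(b\equiv c)$ and $K_c K_c(b\equiv c)$: each agent knows that she knows $b\equiv c$. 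What remains is the ``cross'' knowledge, i.e.\ $K_b K_c(b\equiv c)$ and $K_c K_b(b\equiv c)$.

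For the cross terms, the cleanest route is to first establish that each of $b$ and $c$ individually knows $b\equiv c$ --- that is, $K_b(b\equiv c)$ and $K_c(b\equiv c)$. This is exactly what the computation in the previous paragraph gives (the two conjuncts $K_b(b\preceq c)$ and $K_b(c\preceq b)$ assemble, via $K_b$-Distribution and propositional logic, into $K_b(b\equiv c)$, and symmetrically for $c$). Once I have $b\equiv c \to (K_b(b\equiv c)\wedge K_c(b\equiv c))$, I apply $K_b$ and $K_c$ to this implication: using axiom~$4$ on the already-known facts, $K_c(b\equiv c)$ gives $K_c K_c(b\equiv c)$, and Knowledge Transfer (via $b\equiv c \to b\preceq c$, so $K_c\psi \to K_b\psi$) turns $K_c K_c(b\equiv c)$ into $K_b K_c(b\equiv c)$; symmetrically $K_b K_b(b\equiv c)$ plus transfer along $c\preceq b$ gives $K_c K_b(b\equiv c)$.

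Putting this together: from $b\equiv c$ I have derived $b\equiv c$ itself (Veracity), together with $K_b(b\equiv c)$, $K_c(b\equiv c)$, $K_b K_b(b\equiv c)$, $K_b K_c(b\equiv c)$, $K_c K_b(b\equiv c)$, $K_c K_c(b\equiv c)$ --- in particular $b\equiv c \to \bigwedge_{x\in\{b,c\}} K_x\bigl(b\equiv c \wedge K_b(b\equiv c)\wedge K_c(b\equiv c)\bigr)$, or more simply $b\equiv c \to \bigwedge_{x\in\{b,c\}} K_x(b\equiv c)$ holding even ``one level up''. The standard way to close the argument is: let $\varphi := b\equiv c$; we have shown $\models_{S5}\varphi \to \bigwedge_{x\in\{b,c\}}K_x\varphi$; by $K^{\{b,c\}}$-Necessitation, $\models_{S5} K^{\{b,c\}}(\varphi \to \bigwedge_{x\in\{b,c\}}K_x\varphi)$; then the $K^A$-Induction axiom yields $\models_{S5}\varphi \to K^{\{b,c\}}\varphi$, which is the desired conclusion. (Strictly, Induction needs $\varphi \to \bigwedge_x K_x\varphi$ as a \emph{theorem}, which is what the first two paragraphs establish; no stronger ``$S4$-style'' step as in Proposition~\ref{p17} is even needed here, since Known Superiority already delivers the base implication directly.)

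The main obstacle is the cross knowledge $K_b K_c(b\equiv c)$ and $K_c K_b(b\equiv c)$: Known Superiority only tells each agent about its \emph{own} comparison facts, so one genuinely needs the Knowledge Transfer axiom (together with $b\preceq c$ and $c\preceq b$) to move one agent's knowledge into the other's, and one must be careful to apply it to the \emph{iterated} knowledge formula rather than just the base formula. Everything else is routine modal bookkeeping with $K_A$-Distribution, axiom~$4$, and the fixed-point/induction axioms for common knowledge.
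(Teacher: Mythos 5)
Your proposal is correct and follows essentially the same route as the paper: establish $b\equiv c \to K_b(b\equiv c)\wedge K_c(b\equiv c)$ (the paper simply cites Proposition~\ref{proposition3} for singletons, whereas you re-derive it from Known Superiority plus Knowledge Transfer), then close with Necessitation and the $K^A$-Induction axiom. The detour through iterated ``cross'' knowledge $K_bK_c(b\equiv c)$ etc.\ is unnecessary, as you yourself note at the end — the Induction axiom only needs the one-level implication as a theorem.
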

\begin{proof}\label{proof2}
Using Proposition \ref{proposition3}, applied to singleton groups ${b}$ and ${c}$, we have

${\models_{S5} {b} \equiv {c} \to} {K_b (({c} \preceq {b}) \wedge ({b} \preceq {c})) \wedge K_c (({c} \preceq {b}) \wedge ({b} \preceq {c}))}$.

By Necessitation we have $\models_{S5} K^{\{b,c\}} ({{b} \equiv {c}} \to K_b ({{c} \preceq {b}} \wedge {{b} \preceq {c}}) \wedge K_c ({{c} \preceq {b}} \wedge {{b} \preceq {c}}))$. 

Using the Induction axiom for common knowledge, we obtain:

 ${\models_{S5} {b} \equiv {c}} \to K^{\{b,c\}} {(({c} \preceq {b}) \wedge ({b} \preceq {c}))}$, i.e.,~${\models_{S5} {b} \equiv {c}} \, \to \, {K^{\{b,c\}}({b} \equiv {c})}$.
\end{proof}

In a similar fashion, but~now using the theorem in $S5$ which states that whenever $\varphi \to K \varphi$ is provable then so is $\neg \varphi \to K \neg \varphi$, one can show that:
\begin{Proposition}\label{proposition9c} In $S5$ models, if~it is not the case that agents $b$ and $c$ are epistemically equivalent, then this fact is common knowledge in the group consisting of $b$ and $c$:
$$ \models_{S5} {b} \not \equiv {c} \, \to \, K^{\{b,c\}}({b} \not \equiv {c})$$
\end{Proposition}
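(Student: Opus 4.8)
The plan is to derive Proposition \ref{proposition9c} as the ``dual'' of Proposition \ref{proposition9b} in exactly the way the preceding proofs (of the second statements in Propositions \ref{proposition3}, \ref{q0}, and Proposition \ref{rp1}) handle their negated versions. Concretely, Proposition \ref{proposition9b} established $\models_{S5} {b} \equiv {c} \to K^{\{b,c\}}({b} \equiv {c})$. I want to turn this into a statement of the form $\varphi \to K \varphi$ so that the $S5$-theorem ``if $\varphi \to K\varphi$ is provable then so is $\neg\varphi \to K\neg\varphi$'' applies.

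First I would abbreviate $\varphi := ({b} \equiv {c})$ and observe that Proposition \ref{proposition9b} gives us $\models_{S5} \varphi \to K^{\{b,c\}}\varphi$, where $K^{\{b,c\}}$ is the common-knowledge modality for the group $\{b,c\}$. The key point is that the schematic $S5$-metatheorem being invoked throughout this section (``whenever $\psi \to K\psi$ is provable then so is $\neg\psi \to K\neg\psi$'') applies to $K^{\{b,c\}}$ as well, since $K^{\{b,c\}}$ satisfies the $S5$ modal axioms (it is an $S5$ modality: reflexive by the Fixed Point axiom together with Veracity, transitive always, and in $S5$-models also Euclidean/symmetric because each $R_a$ is). So from $\models_{S5} \varphi \to K^{\{b,c\}}\varphi$ we immediately obtain $\models_{S5} \neg\varphi \to K^{\{b,c\}}\neg\varphi$, that is, $\models_{S5} ({b}\not\equiv{c}) \to K^{\{b,c\}}({b}\not\equiv{c})$, which is the claim.

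I expect the only genuine subtlety — and hence the ``main obstacle'' worth a sentence — to be checking that the $S5$-metatheorem may legitimately be applied with $K$ instantiated to the common-knowledge operator $K^{\{b,c\}}$ rather than to an individual/distributed-knowledge operator $K_A$. This is fine because the relevant instance of the metatheorem only uses that $K^{\{b,c\}}$ is a normal $S5$ modality (it validates $K$-Distribution, $T$, $4$, $5$, and Necessitation), all of which hold for $R^{\{b,c\}}$ on $S5$-models; alternatively, one can argue semantically: $R^{\{b,c\}}$ is an equivalence relation on any $S5$-model, and for an equivalence relation $\models \psi \to \Box\psi$ is equivalent to ``$\psi$ is a union of equivalence classes'', which is preserved under complementation, giving $\models \neg\psi \to \Box\neg\psi$. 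Everything else is the same boiler-plate dualization already used four times above, so no further calculation is needed.

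\begin{proof}
Let $\varphi := ({b} \equiv {c})$. By Proposition \ref{proposition9b} we have $\models_{S5} \varphi \to K^{\{b,c\}}\varphi$. On $S5$-models the common-knowledge relation $R^{\{b,c\}}$ is an equivalence relation, so $K^{\{b,c\}}$ is a normal $S5$ modality; hence the $S5$-metatheorem ``whenever $\psi \to K\psi$ is provable, so is $\neg\psi \to K\neg\psi$'' applies with $K = K^{\{b,c\}}$, yielding $\models_{S5} \neg\varphi \to K^{\{b,c\}}\neg\varphi$. Since $\neg\varphi$ is $({b}\not\equiv{c})$, this is exactly $\models_{S5} {b}\not\equiv{c} \to K^{\{b,c\}}({b}\not\equiv{c})$.
\end{proof}
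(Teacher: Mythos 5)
Your proposal is correct and matches the paper's own (one-line) argument, which likewise obtains the result from Proposition \ref{proposition9b} via the $S5$-metatheorem ``$\varphi \to K\varphi$ provable implies $\neg\varphi \to K\neg\varphi$ provable''. You also supply the one justification the paper leaves implicit --- that $R^{\{b,c\}}=(R_b\cup R_c)^*$ is an equivalence relation on $S5$-models, so $K^{\{b,c\}}$ is a legitimate instance of $K$ in that metatheorem --- and your check of this point is sound.
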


\subsection{Reasoning About Overlapping Groups and Epistemic~Free-Riders}

\smallskip\par\noindent
{\bf Overlapping Groups.} We consider the case in which two groups, $B \cup A$ and $C \cup A$, overlap only in $A$, i.e.,~they have a subgroup of agents $A$ in common. For~example, we can think of scenarios in which $A$ are the ``double-spy agents'' belonging to two competing teams. As~one may expect, the~double-agents $A$ may play a crucial role in establishing one team's epistemic position with respect to another team. More specifically, the~competitive epistemic advantage of a group $B \cup A$ over $C \cup A$ in the presence of the double agents $A$, does not imply that the group $B$ necessarily will have the same epistemic advantage over $C$ without $A$, as~it may well be the case that $B$ and $C$ are themselves epistemically equal or incomparable. The~latter case is expressed in the following observation:
\begin{Observation}\label{observation5} In $S5$ models, if~a group $B \cup A$ knows at least as much as another group
$C \cup A$ then it is not necessarily the case that group $B$ (collectively) also knows at least as much as group
$C$ without the presence of $A$:
$$\not \models_{S5} B \cup A \preceq C \cup A \, \to \,  B \preceq C$$
\end{Observation}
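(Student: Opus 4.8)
The plan is to construct an explicit $S5$ counterexample: a small Kripke model together with groups $A$, $B$, $C$ and a designated world $w$ such that $w \models B\cup A \preceq C\cup A$ but $w \not\models B \preceq C$. The natural configuration to aim for is one where $B$ and $C$ are themselves \emph{incomparable} (or merely equivalent), while the common agents $A$ supply enough extra ``refinement'' to make the two augmented groups comparable. Concretely, I would take $\mathcal{A} = \{a, b, c\}$, with $A = \{a\}$, $B = \{b\}$, $C = \{c\}$, so that the claim to be refuted becomes $\{a,b\} \preceq \{a,c\} \to \{b\}\preceq\{c\}$.

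First I would fix the worlds and relations so that, at the chosen world $w$, the joint relation $R_{\{a,b\}} = R_a \cap R_b$ is contained in $R_{\{a,c\}} = R_a \cap R_c$ — this gives $w \models \{a,b\}\preceq\{a,c\}$ by the semantics of $\preceq$. The simplest way to force this is to make $R_a$ very fine (e.g.\ $a$ already distinguishes all the worlds that would otherwise cause trouble), so that intersecting with $R_a$ collapses the difference between $R_b$ and $R_c$. Second, I would simultaneously arrange that $R_b \not\subseteq R_c$ at $w$, i.e.\ there is some world $t$ with $w R_b t$ but not $w R_c t$; this yields $w \not\models \{b\}\preceq\{c\}$. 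A clean realization: four worlds, with $b$ unable to distinguish some pair that $c$ \emph{can} distinguish, but where every world that $b$'s relation reaches and $c$'s does not is already excluded by $a$. Since all relations must be equivalence relations for an $S5$ model, I would specify each $R_x$ as a partition and just check reflexivity/symmetry/transitivity are automatic.

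Given the structure of the two preceding observations (Observation \ref{observation3} and Observation \ref{observation4}), I would expect to reuse the model of Figure~\ref{Fig5.} if possible, or a minor variant of it, since that model already exhibits incomparability of singleton agents; the task would then be to add a third agent $a$ playing the ``double-agent'' role whose partition is the common refinement of the others (or fine enough on the relevant worlds). After exhibiting the model, the verification is routine: compute $R_{\{a,b\}}$ and $R_{\{a,c\}}$ as intersections, check the inclusion at $w$, and exhibit the witnessing world for $\{b\} \not\preceq \{c\}$.

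The main obstacle is a design constraint rather than a deep difficulty: I must ensure the inclusion $R_{\{a,b\}} \subseteq R_{\{a,c\}}$ holds \emph{globally enough} to validate the antecedent $B\cup A \preceq C\cup A$ at $w$ (it is a local, world-indexed statement, so only the relation out of $w$ matters), while keeping $R_b \not\subseteq R_c$ out of $w$ — and do all of this with relations that are genuine equivalence relations (so the partitions of $b$ and $c$ cannot be chosen independently of the constraint that $a$'s partition refine the discrepancy). Balancing these — small model, all $S5$, antecedent true, consequent false — is the only thing requiring care; once a valid partition triple is written down, the rest is mechanical unfolding of definitions.
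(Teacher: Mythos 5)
Your proposal is correct and takes essentially the same approach as the paper: the paper's proof instantiates your plan with the model of Figure~\ref{Fig5.} at world $t$, where $R_c[t]=\{t\}$ is fine enough to make $\{a,c\}$ and $\{b,c\}$ epistemically equivalent even though $a \not\preceq b$ there. Note that Figure~\ref{Fig5.} already contains the third ``double-agent'' you propose to add, so it works as is (with $A=\{c\}$, $B=\{a\}$, $C=\{b\}$).
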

\begin{proof}
To show that the statement is not valid in $S5$ models, consider the model $M$ in Figure~\ref{Fig5.} in which agents $a$ and $b$ are epistemically incomparable in world $t$ but if they all team up with agent $c$, the~groups become epistemically equivalent, i.e.,~$t \models_{M} \{\{a\} \cup \{c\}\} \equiv \{\{b\} \cup \{c\}\}$ while $t \not \models_{M} a \preceq {b}$.
Hence, the~statement is also not valid in $S4$ or $KT$ models.
\end{proof}

However, if~a group $B$ knows at least as much as a different group $C$, then enlarging both groups with $A$ will not make an epistemic difference. This indicates that the converse statement of Observation \ref{observation5} is valid, moreover it is valid in weaker $KT$ models. Similarly, such a statement is valid for the case of epistemically equivalent groups:
\begin{Proposition}\label{KT-version} In $KT$-models, if~a group $B$ (collectively) knows at least as much (knows the same information, respectively) as~another group $C$, then $B \cup A$ knows at least as much (knows the same information, respectively) as~$C \cup A$:
$$\models_{KT} B \preceq C \, \to \,  B \cup A \preceq C \cup A$$
$$\models_{KT} B \equiv C \, \to \,  B \cup A \equiv C \cup A$$
\end{Proposition}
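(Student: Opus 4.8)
The plan is to prove the first implication directly from the comparison axioms in Table~\ref{tb0}, and then obtain the second as an immediate corollary. For the first implication I would start from the hypothesis $B \preceq C$ and build up $B \cup A \preceq C \cup A$ by chaining Inclusion, Transitivity, and Additivity. First, by the Inclusion axiom (since $B \subseteq B \cup A$) we have $B \cup A \preceq B$; combining this with the hypothesis $B \preceq C$ via Transitivity gives $B \cup A \preceq C$. Next, again by Inclusion (since $A \subseteq B \cup A$) we have $B \cup A \preceq A$. Applying Additivity to $B \cup A \preceq C$ and $B \cup A \preceq A$ then yields $B \cup A \preceq C \cup A$, as desired. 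Since Inclusion, Transitivity, and Additivity are all $KT$-valid (Proposition~\ref{p-1}), this establishes the first implication over $KT$-models.

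For the second implication, unfold $B \equiv C$ as $(B \preceq C) \wedge (C \preceq B)$. Applying the first implication to each conjunct gives $B \cup A \preceq C \cup A$ and $C \cup A \preceq B \cup A$, whose conjunction is exactly $B \cup A \equiv C \cup A$.

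Alternatively, one can argue semantically: fix a $KT$-model $M$ and a world $w$ with $w \models B \preceq C$, i.e.\ $w R_B s$ implies $w R_C s$ for all $s$. By Observation~\ref{sum}, $R_{B \cup A} = R_B \cap R_A$ and $R_{C \cup A} = R_C \cap R_A$, so if $w R_{B\cup A} s$ then $w R_B s$ and $w R_A s$; the hypothesis gives $w R_C s$, hence $w R_{C \cup A} s$, so $w \models B \cup A \preceq C \cup A$. I do not expect a genuine obstacle here: the only point to watch is applying Inclusion in the correct direction (the larger group sits on the left of $\preceq$), and observing that none of the invoked axioms needs transitivity or symmetry of the accessibility relations, so the whole argument already goes through over $KT$ and hence, a fortiori, over $S4$ and $S5$.
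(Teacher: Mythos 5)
Your proposal is correct, but your primary argument takes a genuinely different route from the paper's. The paper proves the first implication purely semantically: it fixes a world $s$ with $s \models B \preceq C$, assumes $s R_{B\cup A} t$, uses Observation~\ref{sum} to decompose this as $s R_B t$ and $s R_A t$, applies the hypothesis to get $s R_C t$, and recombines to obtain $s R_{C\cup A} t$ --- exactly the ``alternative'' semantic argument you sketch at the end. Your main derivation is instead syntactic: you chain Inclusion ($B \cup A \preceq B$ and $B \cup A \preceq A$), Transitivity ($B \cup A \preceq C$), and Additivity ($B \cup A \preceq C \cup A$), all of which are $KT$-valid by Proposition~\ref{p-1}, and you apply Inclusion in the correct direction (larger group on the left). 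This axiomatic route buys something the paper's does not: it shows the statement is actually \emph{derivable} from the comparison axioms of group (III) alone, which is relevant to the open problem of axiomatizing the logic over $KT$-models mentioned in the conclusion; the paper's semantic route is more self-contained in that it depends only on the definition of $R_{A\cup B}$ and not on the (unproved in this paper) soundness claims of Proposition~\ref{p-1}. Your handling of the second implication --- unfolding $B \equiv C$ into two conjuncts and applying the first implication in both directions --- spells out what the paper leaves as ``follows from this,'' and is correct.
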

\begin{proof}
For the first statement in this proposition, first assume that $s \models B \preceq C$ in an arbitrary world $s$ of a $KT$ model, this means via the semantic clause of epistemic comparison statements in $KT$-models that $\forall t\in S \,\, (s R_B t \, \mbox{implies} \, sR_C t)$. We have to prove that ${\forall t\in S} \,\, (s R_{B \cup A} t \, \mbox{implies}$ $sR_{C \cup A} t)$. Hence, assume that $s R_{B \cup A} t$, which by Observation \ref{sum} gives us $s (R_B \cap R_A) t$, i.e.,~$s R_B t$ and $s R_A t$. Together they imply that $s (R_C \cap R_A) t$, i.e.,~$s R_{C \cup A} t$ and hence $s \models B \cup A \preceq C \cup A$. The~second statement of this proposition follows from this.
\end{proof}

From Proposition \ref{KT-version} also follows the slightly weaker statement $\models_{KT} B \prec C \, \to$  $B \cup A \preceq$ $C \cup A$.
Note that in this case when $B$ knows strictly more than $C$, it may well be that a group of agents $A$ can compensate for the epistemic difference between $B$ and $C$, which leads to:
\begin{Observation} In $S5$-models, a~more-expert group $B$ over another group $C$, may lose its relative epistemic position when both groups are extended with a group $A$. 
$$\not \models_{S5} B \prec C \, \to \,  B \cup A \prec C \cup A$$
\end{Observation}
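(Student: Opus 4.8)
The plan is to refute the implication with a single concrete counterexample, and the natural candidate is the $S5$ model $M$ of Figure~\ref{Fig5.}, which already witnessed Observation~\ref{observation5}. I would instantiate $B := \{a\}$, $C := \{b\}$, and let $A := \{c\}$ play the role of the ``double agent'' that compensates for the epistemic gap, carrying out all the reasoning at the world $u$ (the one labelled $TH$).

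Step one is to check that the antecedent holds there, i.e.\ $u \models a \prec b$. This is already part of the discussion following Figure~\ref{Fig5.}, but I would re-derive it from the semantics of $\preceq$: at $u$ agent $a$'s relation loops only to $u$ itself, so $u R_a w$ forces $w = u$ and hence $u \models a \preceq b$ is immediate; conversely $u R_b t$ holds while $u R_a t$ does not, so $b \not\preceq a$ at $u$. Hence $u \models a \preceq b \wedge b \not\preceq a$, that is $u \models a \prec b$.

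Step two is to show that the consequent fails, i.e.\ $u \not\models \{a,c\} \prec \{b,c\}$. The one computation that matters is, via Observation~\ref{sum}, that $R_{B \cup A} = R_a \cap R_c$ and $R_{C \cup A} = R_b \cap R_c$, and that at $u$ both of these intersections collapse to the single diagonal pair $(u,u)$ --- because in the three-cycle $M$ each agent links exactly one pair of worlds, so any two individual relations together already pin down the actual world at every state. Consequently $u \models \{a,c\} \preceq \{b,c\}$ and $u \models \{b,c\} \preceq \{a,c\}$, i.e.\ $u \models \{a,c\} \equiv \{b,c\}$; in particular the second conjunct of $\{a,c\} \prec \{b,c\}$, namely $\{b,c\} \not\preceq \{a,c\}$, is false at $u$. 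Combining the two steps, $u$ refutes $(a \prec b) \to (\{a\}\cup\{c\} \prec \{b\}\cup\{c\})$, so the formula is not $S5$-valid (and, since $S5$-models form a subclass of the $S4$- and $KT$-models, it is not $S4$- or $KT$-valid either).

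I do not expect a genuine obstacle; the only point calling for care is the bookkeeping in Step two --- verifying that throwing the shared agent $c$ into both teams really does equalise them at $u$ --- and that is immediate from the shape of $M$. It is worth pointing out why the failure is structurally unavoidable rather than an artefact of this particular model: by Proposition~\ref{KT-version} the premise $B \prec C$ already entails $B \cup A \preceq C \cup A$, so the only way for $B \cup A \prec C \cup A$ to fail is for $A$ to restore the reverse comparison $(C\cup A) \preceq (B\cup A)$, which is exactly what $c$ accomplishes above.
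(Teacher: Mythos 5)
Your proposal is correct and is essentially the paper's own argument: the paper also refutes the implication in the model of Figure~\ref{Fig5.} at world $u$, merely with the roles of the two non-$B$ singletons swapped (it takes $C=\{c\}$ and $A=\{b\}$, using $u\models a\prec c$ and $\{a,b\}\equiv\{c,b\}$, whereas you take $C=\{b\}$ and $A=\{c\}$). Your write-up is in fact slightly more careful, since you verify both the antecedent and the collapse of the two joint relations explicitly.
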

\begin{proof}
To show that the statement in this observation is not valid in $S5$ models, we consider the model $M$ in Figure~\ref{Fig5.} where agent $a$ knows more than $c$ in world $u$, i.e.,~$u \models_{M} {a} \prec {c}$ but if each of them teams up with $b$, both teams will be epistemically equivalent, hence $s \models_{M} \{a,b\} \not \prec \{c,b\}$.
%we construct a $KT$ model in Figure~7 with two possible worlds $s$ and $t$, three agents $a,b,c$ and one atomic sentence $p$. In $s\models_{M''} \{b\} \prec \{a\}$ while $s\models_{M''}\{b\}\cup\{c\} \not\prec \{a\} \cup \{c\}$.
Hence, the~statement is also not valid in $S4$ or $KT$ models.
\end{proof}

If two different groups, $B$ and $C$, are epistemically comparable and one group has an epistemic advantage over the other whenever they each are fused with the agents in $A$, then the leading group will keep its epistemic advantage also without the presence of $A$. We can express this as the following proposition:
\begin{Proposition}\label{prop8} In $KT$-models, when $B$ and $C$ are epistemically comparable, the~epistemic position of a bigger more-expert group $B \cup A$ relative to group $C \cup A$, carries over to the smaller group $B$ relative to $C$ when $A$ has departed from both.
$$ \models_{KT} (B \cup A \prec C \cup A) \wedge (B \not \perp C) \, \to \, B \prec C$$
\end{Proposition}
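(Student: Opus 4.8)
The plan is to reduce the statement entirely to Proposition~\ref{KT-version} (which transfers weak comparisons and equivalences under enlargement by a common group) together with the characterization of $\not\perp$ from Observation~\ref{constructs}, via one short argument by contradiction. Unfolding the abbreviations, the hypothesis $B\cup A\prec C\cup A$ says that $B\cup A\preceq C\cup A$ and $C\cup A\not\preceq B\cup A$; by the third equivalence of Observation~\ref{constructs}, the hypothesis $B\not\perp C$ is equivalent to $B\preceq C\vee C\preceq B$; and the goal $B\prec C$ unfolds to $B\preceq C$ together with $C\not\preceq B$.

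The key step I would carry out first is to establish $C\not\preceq B$ by contradiction. Assume $C\preceq B$. Applying Proposition~\ref{KT-version} with the roles of $B$ and $C$ interchanged (which is legitimate, since $A,B,C$ are arbitrary groups) yields $C\cup A\preceq B\cup A$. But this directly contradicts the second conjunct $C\cup A\not\preceq B\cup A$ coming from the strict hypothesis $B\cup A\prec C\cup A$. Hence $C\not\preceq B$. Combining this with the disjunction $B\preceq C\vee C\preceq B$ supplied by Observation~\ref{constructs}, we conclude $B\preceq C$. Therefore $B\preceq C$ and $C\not\preceq B$ both hold, i.e.\ $B\prec C$, as desired. (If one prefers a semantic presentation, the same reasoning works pointwise at an arbitrary world $s$ of a $KT$-model, using the $KT$-validity of Proposition~\ref{KT-version} and Observation~\ref{constructs}.)

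There is no real obstacle here: the argument is essentially bookkeeping with the abbreviations, and the only points that need care are that Proposition~\ref{KT-version} is being invoked in the ``swapped'' direction, and that the hypothesis $B\not\perp C$ is used precisely to rule out the incomparable case — which, as the preceding Observation~\ref{observation5} and the subsequent observation show, is genuinely possible when that hypothesis is dropped. Since Proposition~\ref{KT-version}, Observation~\ref{constructs}, and propositional logic are all sound on $KT$-models, the whole implication is $KT$-valid.
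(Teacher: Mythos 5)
Your proof is correct and follows essentially the same route as the paper's: both arguments hinge on Observation~\ref{constructs} to exclude the incomparable case and on Proposition~\ref{KT-version} applied with $B$ and $C$ swapped to refute $C\preceq B$. The only difference is presentational — the paper argues by contraposition on the whole implication, while you derive the two conjuncts of $B\prec C$ directly — but the logical content is identical.
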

\begin{proof}
Suppose on the contrary that in an arbitrary world $s$ of a $KT$ model, we have $s \models \neg (B \prec C)$, then we know from Observation \ref{constructs} that either $s \models C \preceq B$ or $s \models B \perp C$ holds. In~the first case, we use Proposition \ref{KT-version} to obtain $s \models C \cup A \preceq B \cup A$ which implies $s \models \neg (B \cup A \prec C \cup A)$, while the second case establishes the negation of the second conjunct. Hence, we obtain \\ $s \models \neg((B \cup A \prec C \cup A) \wedge (B \not \perp C))$.
\end{proof}

If an epistemic advantage is established by $B$ when it is fused with $A$, over~the group composed of $C$ and $A$, then it will hold up even if $A$ steps out of the coalition with $C$, i.e.,
\begin{Proposition}\label{prop9} In $KT$-models, if~$B \cup A$ knows more than $C \cup A$, then $B \cup A$ also knows more than $C$ alone:
$$\models_{KT} B \cup A \prec C \cup A \, \to \,  B \cup A \prec C$$
\end{Proposition}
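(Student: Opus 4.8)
The plan is to unfold the abbreviation $X \prec Y := X \preceq Y \wedge Y \not\preceq X$ and then establish the two conjuncts of the conclusion separately, using only the Inclusion and Transitivity axioms together with propositional reasoning. In particular the argument will go through already on $KT$-models (indeed, purely from the axiom system), with no introspection needed.

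So the hypothesis $B\cup A \prec C\cup A$ gives us two things: $B\cup A \preceq C\cup A$ and $C\cup A \not\preceq B\cup A$; and we must derive $B\cup A \preceq C$ together with $C \not\preceq B\cup A$. For the first conjunct, note that $C \subseteq C\cup A$, so the Inclusion axiom yields $C\cup A \preceq C$; combining this with the hypothesis $B\cup A \preceq C\cup A$ via Transitivity gives $B\cup A \preceq C$, as desired.

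For the second conjunct we argue by contradiction: suppose $C \preceq B\cup A$. Again by Inclusion (since $C \subseteq C\cup A$) we have $C\cup A \preceq C$, and then Transitivity, applied to $C\cup A \preceq C$ and $C \preceq B\cup A$, yields $C\cup A \preceq B\cup A$. This contradicts the second conjunct $C\cup A \not\preceq B\cup A$ of the hypothesis, so in fact $C \not\preceq B\cup A$. Putting the two conjuncts together gives $B\cup A \preceq C \wedge C \not\preceq B\cup A$, i.e.\ $B\cup A \prec C$.

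\textbf{Main obstacle.} There is no real difficulty here; the only points to watch are the direction of the Inclusion axiom (the \emph{larger} group sits on the left of $\preceq$, so $C\cup A \preceq C$ rather than the reverse) and the bookkeeping of the $\prec$ and $\not\preceq$ abbreviations. An alternative would be a direct semantic argument using $R_{X\cup A} = R_X \cap R_A$ (Observation~\ref{sum}) and monotonicity of intersection, but the axiomatic route above is shorter and makes the $KT$-validity manifest.
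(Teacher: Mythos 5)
Your proof is correct and follows essentially the same route as the paper's: Inclusion gives $C\cup A \preceq C$, and Transitivity chains this with the hypothesis. In fact you are slightly more careful than the paper, which compresses the strictness part into a single appeal to ``transitivity''; your explicit contrapositive argument for $C \not\preceq B\cup A$ is exactly the detail needed to justify that a $\prec$ followed by a $\preceq$ yields a $\prec$.
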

\begin{proof}
As $C \subseteq  C \cup A$, the~Inclusion axiom indicates that $\models_{KT} C\cup A \preceq C$. If~$ \models_{KT} B \cup A \prec C \cup A \preceq C$, then by the transitivity axiom we obtain $\models_{KT} B \cup A \prec C \cup A \, \to \,  B \cup A \prec C$.
\end{proof}

In epistemic $KT$ models, a~group $B$ knows at least as much as $C$ and knows at least as much as $E$ if and only if it knows at least as much as $C \cup E$. This indicates that while $B$ may have an epistemic advantage over each one of the groups $C$ and $E$, but~if $C$ and $E$ join epistemic forces, they may well be epistemically as strong as $B$:
\begin{Proposition}\label{pww} In $KT$-models, if~$B$ (collectively) knows at least as much as group $C$ and as group $E$, then $B$ (collectively) knows at least as much as the union of $C$ and $E$:
$$\models_{KT} (B \preceq C \wedge B \preceq E) \, \leftrightarrow \,  (B \preceq C \cup E)$$
\end{Proposition}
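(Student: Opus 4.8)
The plan is to split the biconditional into its two implications, each of which follows immediately from the axioms for comparative knowledge already listed in Table~\ref{tb0} (or, equivalently, from the semantics together with Observation~\ref{sum}).

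For the left-to-right direction, assume $B \preceq C$ and $B \preceq E$. This is precisely an instance of the \emph{Additivity} axiom $\left(A\preceq B\wedge A\preceq C\right) \to A\preceq B\cup C$, taking the schematic group $A$ to be $B$, and the groups $B, C$ in the axiom to be $C, E$ respectively; so we conclude $B \preceq C \cup E$. For the right-to-left direction, assume $B \preceq C \cup E$. Since $C \subseteq C \cup E$, the \emph{Inclusion} axiom gives $C\cup E \preceq C$; combining this with $B \preceq C \cup E$ via the \emph{Transitivity} axiom yields $B \preceq C$. The symmetric argument, using $E \subseteq C \cup E$, gives $B \preceq E$. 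Hence $B \preceq C \wedge B \preceq E$, completing the proof.

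Alternatively, one can argue semantically in one line: fix a world $s$ of a $KT$ model. By Observation~\ref{sum} we have $R_{C\cup E} = R_C \cap R_E$, so $s\models B\preceq C\cup E$ unfolds to $\forall t\, (sR_Bt \Rightarrow sR_Ct \text{ and } sR_Et)$, which is manifestly equivalent to the conjunction of $\forall t\,(sR_Bt\Rightarrow sR_Ct)$ and $\forall t\,(sR_Bt\Rightarrow sR_Et)$, i.e.\ to $s\models B\preceq C$ and $s\models B\preceq E$. There is essentially no obstacle here: the content of the proposition is just that the semantic clause for $\preceq$ commutes with the intersection defining $R_{C\cup E}$, and the only mild point of care is to invoke Observation~\ref{sum} (rather than silently assuming $R_{C\cup E}=R_C\cap R_E$) when passing between the group $C\cup E$ and its constituents.
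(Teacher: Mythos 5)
Your proof is correct and follows essentially the same route as the paper: Additivity for the left-to-right direction, and Inclusion plus Transitivity for the right-to-left direction (the paper packages the latter as an appeal to its Proposition~\ref{prop9}, whose proof is exactly the Inclusion--Transitivity argument you spell out). The additional semantic one-liner via Observation~\ref{sum} is a fine, equally valid alternative.
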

\begin{proof}
The direction from left to right follows from the Additivity axiom, and the direction from right to left follows from Proposition \ref{prop9}.
\end{proof}

Interestingly, the~strict version of the last proposition does not hold, i.e.,
\begin{Observation} The analogue of Proposition \ref{pww} does not hold for strict epistemic comparisons, not even in $S5$ models:
$$\not \models_{S5} (B \prec C \wedge B \prec E) \, \leftrightarrow \,  (B \prec C \cup E)$$
\end{Observation}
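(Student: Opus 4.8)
The plan is to refute the biconditional at a single world of a single $S5$ model, and in fact the model $M$ of Figure~\ref{Fig5.} already does the job, so no new model need be constructed. The first thing I would do is pin down which of the two implications fails. By Proposition~\ref{pww} we always have $B\preceq C\cup E \leftrightarrow (B\preceq C\wedge B\preceq E)$; and since $C,E\subseteq C\cup E$, the Inclusion and Transitivity axioms give that $C\cup E\not\preceq B$ entails $C\not\preceq B$ and $E\not\preceq B$. Combining these, $B\prec C\cup E \to (B\prec C\wedge B\prec E)$ is valid, so the counterexample must break the converse $(B\prec C\wedge B\prec E)\to(B\prec C\cup E)$. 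Thus I am looking for a configuration in which $C$ and $E$ each know strictly less than $B$, while their pooled (distributed) knowledge equals that of $B$.

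Next I would instantiate $B:=\{c\}$, $C:=\{a\}$, $E:=\{b\}$ and evaluate at world $t$ of $M$ (the state $HH$). At $t$, agent $c$ has pinned down the full state --- the coins match and $TT$ is excluded --- so $R_c(t)=\{t\}$, whereas $R_a(t)=\{s,t\}$ and $R_b(t)=\{t,u\}$. Since $R_c(t)\subsetneq R_a(t)$ and $R_c(t)\subsetneq R_b(t)$, the semantics of $\preceq$ yields $t\models c\preceq a$, $t\models a\not\preceq c$, $t\models c\preceq b$ and $t\models b\not\preceq c$, i.e.\ $t\models c\prec a$ and $t\models c\prec b$: the left-hand side of the biconditional holds at $t$.

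Then I would compute the joint accessibility relation. By Observation~\ref{sum}, $R_{\{a,b\}}(t)=R_a(t)\cap R_b(t)=\{s,t\}\cap\{t,u\}=\{t\}=R_c(t)$, so $t\models c\equiv\{a,b\}$, and in particular $t\models\{a,b\}\preceq c$, which means $t\not\models c\prec\{a,b\}$ --- the right-hand side fails at $t$. Hence the biconditional is false at $t$ in the $S5$ model $M$, establishing $\not\models_{S5}(B\prec C\wedge B\prec E)\leftrightarrow(B\prec C\cup E)$; and since $M$ is also an $S4$- and $KT$-model, the non-validity carries over to those classes as well.

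The step I expect to be the main obstacle is the first one: it is less a technical hurdle than the conceptual point to get right, namely recognizing that Proposition~\ref{pww} keeps $B\preceq C\cup E$ intact, so the only place the biconditional can break is \emph{strictness} --- the union $C\cup E$ ``catching up'' with $B$ by means of complementary partial information that neither $C$ nor $E$ has on its own. Once that is seen, Figure~\ref{Fig5.}'s penny--quarter model, read at the world where $c$ already knows the whole state while $a$ and $b$ each know only one coin, is exactly the configuration needed, and the remaining verifications are routine out-neighbourhood computations.
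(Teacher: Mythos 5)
Your proof is correct and takes essentially the same approach as the paper: the paper also refutes only the right-to-left implication using the model of Figure~\ref{Fig5.}, just at world $u$ with $a$ as the expert and $\{b\},\{c\}$ as the two individually weaker but jointly equivalent groups (a symmetric variant of your choice of $t$ and $c$). Your preliminary argument that the left-to-right direction is in fact valid (via Proposition~\ref{pww}, Inclusion, and Transitivity) is a correct extra observation not spelled out in the paper, but the counterexample itself is the same.
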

\begin{proof}
Consider the model $M$ in Fig.~\ref{Fig5.}, with $u \models_{M} ({a} \prec {b}) \wedge ({a} \prec {c})$ while ${u \models_{M} {a} \equiv ({\{b\} \cup \{c\})}}$, hence $u \not \models_{M} {a} \prec (\{b\} \cup \{c\})$. This shows that $(B \prec C \wedge {B \prec E})  \rightarrow   ({B \prec C \cup E})$ fails, i.e.,~the strict version of the Additivity axiom fails in $S5$.
\end{proof}

\smallskip\par\noindent
{\bf Epistemic free-riders.} Given a state of comparative knowledge of a large group $C$ consisting of different sub-groups, we can analyze the relative contribution of any of its subgroups to the knowledge of the overall larger group $C$. For~instance, if~$C$ is split into two disjoint subgroups, $A$ and $B$ (i.e., $C = A \cup B$ with $A \cap B = \emptyset$), then $A \prec B$ means that, in principle, $B$ makes no additional contribution over $A$ to the bigger group $C$ and can be seen as an ``epistemic free-rider''. Here, $B$ does not contribute anything new that is not already ``known'' by $A$. In~these cases when only the knowledge of the group is at stake, the~group $C$ will not gain any added epistemic value by consulting $B$ as long as $A$'s knowledge is available to $C$. In~contrast, when the groups $A$ and $B$ are epistemically equivalent, i.e.,~$A \equiv B$, one of the two groups $A$ or $B$ already covers all the knowledge of the other group, so either one of them can be an epistemic free-rider, but not both, as at least one of them will need to keep its position within $C$.

\subsection{``Known Superiority''~Revisited}

Should the epistemic comparative modality satisfy the ``Known Superiority'' axiom? One can easily object that assuming that the more-expert group always knows their epistemic position is rather unrealistic. The~above results indicate that a strong account of ``Known Superiority'' holds in any epistemic model in which knowledge is assumed to be factive and both positive and negative introspective. However, a~weakening of ``Known Superiority'' (while keeping the other axioms in place) can be achieved by using a softer notion of knowledge, in~which we give up the requirement of negative introspection. So these agents may not know what they do not know. This brings us to the analysis of a weaker version of ``Known Superiority'' based on the knowledge modeled in epistemic $S4$ models:

\smallskip\par\noindent
{\bf{Analysis of Example 2.}} Consider again the $S4$ model of Figure~\ref{Fig2.}. In~world $s$, agent $b$ does not know that he knows more than $a$, indeed he considers all three scenarios possible, namely that he knows more than $a$ but also that $a$ knows more than him or that they are epistemically equivalent.

As such Figure~\ref{Fig2.} provides a proof for the following observation:
\begin{Observation} In $S4$-models, if~$B$ (collectively) knows at least as much as group $C$ then it is not necessarily the case that $B$ knows this fact: 
$$\not \models_{S4} (B \preceq C) \to K_B (B \preceq C)$$
\end{Observation}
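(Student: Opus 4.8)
The plan is to re-use the $S4$ model $M$ of Figure~\ref{Fig2.} and argue by direct model-checking at the world $s$, instantiating the schema with the singleton groups $B=\{b\}$ and $C=\{a\}$ (singletons are legitimate groups, so a counterexample at this level already falsifies the schema). First I would read off the relevant joint-possibility sets at $s$: agent $a$'s relation $R_{\{a\}}=R_a$ sends $s$ to all four worlds $\{s,t,u,v\}$, whereas $R_b$ sends $s$ only to $\{s,u,v\}$, the single world missed by $b$ being $t$ (so at $s$ agent $b$ has in effect learnt $\neg(TT)$). Since the set of $R_b$-successors of $s$ is contained in the set of $R_a$-successors of $s$, the semantic clause for $\preceq$ gives $s\models b\preceq a$ — in fact $s\models b\prec a$, as already observed in the analysis of Example~2.

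Next I would refute $s\models K_b(b\preceq a)$ by producing a $b$-accessible witness at which the comparison fails. The world $u$ does the job: $sR_b u$ holds (the $b$-labelled edge from $s$ to $u$), but at $u$ agent $a$'s only successor is $u$ itself while $b$'s successors are $\{u,v\}$; hence the $R_b$-successors of $u$ are not included in its $R_a$-successors, so $u\not\models b\preceq a$ (indeed $u\models a\prec b$). From $sR_b u$ and $u\not\models b\preceq a$ we get $s\not\models K_b(b\preceq a)$, and so $M,s$ witnesses $(b\preceq a)\wedge\neg K_b(b\preceq a)$, which is exactly the desired non-validity. This same witness is what makes $b$ treat all three cases — ``$b$ knows more'', ``$a$ knows more'', and ``epistemic equivalence'' — as open at $s$: the $b$-successors $s$, $u$, $v$ realise these three verdicts respectively, matching the informal reading given just before the observation.

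The only point needing care is to confirm that $M$ is genuinely an $S4$ model and not merely a $KT$ one: one checks that the edge set drawn in Figure~\ref{Fig2.}, together with the omitted reflexive loops, is already transitively closed, so that $R_a$ and $R_b$ really are preorders. This is immediate — the only non-trivial composition to verify is $sR_b u$ followed by $uR_b v$, whose result $(s,v)$ is already present. It is equally immediate that $M$ is \emph{not} an $S5$ model (e.g.\ $R_a$ is not symmetric, since $sR_a t$ but not $tR_a s$), and this feature is essential rather than incidental: by Proposition~\ref{proposition2}, Known Superiority holds on every $S5$ model, so any counterexample to the schema must live in a model that is $S4$ but not $S5$. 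Beyond this bookkeeping I do not expect any real obstacle.
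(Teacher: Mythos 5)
Your proposal is correct and is essentially the paper's own argument: the paper also justifies this observation by pointing to the $S4$ model of Figure~\ref{Fig2.} at world $s$, where $s\models b\prec a$ (so $s\models b\preceq a$) yet $b$ cannot rule out the $b$-accessible worlds $u$ and $v$ at which $a\prec b$ and $a\equiv b$ hold respectively, so $s\not\models K_b(b\preceq a)$. Your explicit verification of the successor sets, of transitivity, and of the failure of symmetry (needed in light of Proposition~\ref{proposition2}) just makes precise what the paper leaves to inspection of the figure.
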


However, whenever the less-expert group knows that others know at least as much in epistemic $S4$ and $KT$ models, then so does the more-expert group (as they know everything the less-expert group knows), i.e.,
\begin{Proposition}\label{p16} In $KT$-models, if~group $C$ (collectively) knows that another group $B$ (collectively) knows at least as much as them, then group $B$ also knows it: 
$$\models_{KT} {K_C (B \preceq C) \to K_B (B \preceq C)}$$
\end{Proposition}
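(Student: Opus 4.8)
The plan is to show $\models_{KT} K_C(B \preceq C) \to K_B(B \preceq C)$ by a direct appeal to the Knowledge Transfer axiom, instantiated at the formula $\varphi := B \preceq C$. First I would observe that the antecedent $K_C(B \preceq C)$ asserts, in particular, the plain comparison fact $B \preceq C$ — indeed by Veracity we have $\models_{KT} K_C(B \preceq C) \to (B \preceq C)$. This gives us $B \preceq C$ itself, which is precisely the side condition needed to invoke Knowledge Transfer in the direction we want.

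The key step is the Knowledge Transfer axiom, $A \preceq B \to (K_B \psi \to K_A \psi)$, applied with the roles $A := B$, $B := C$ (i.e.\ ``$B$ knows at least as much as $C$''), and with $\psi := (B \preceq C)$. This yields $\models_{KT}\, B \preceq C \to \big(K_C(B \preceq C) \to K_B(B \preceq C)\big)$. Combining this with the observation above — that the antecedent $K_C(B \preceq C)$ already entails $B \preceq C$ by Veracity — and a bit of propositional reasoning, we conclude $\models_{KT} K_C(B \preceq C) \to K_B(B \preceq C)$, as desired. (Alternatively, one need not even pass through Veracity: one can argue semantically that at any world $s$ with $s \models K_C(B\preceq C)$, in particular $s\models B\preceq C$ since $s R_C s$ by reflexivity, and then apply Knowledge Transfer.)

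I do not anticipate any real obstacle here: the statement is essentially the content of the Knowledge Transfer axiom read with the comparison formula as its payload, so the only mild subtlety is noticing that the needed side condition $B \preceq C$ is handed to us for free by the factivity of $K_C$. No introspection assumptions are used, which is why the result holds already in $KT$-models; this matches the paper's running emphasis that ``Known Inferiority'' transfers upward to the more-expert group even without negative introspection.
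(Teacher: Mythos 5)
Your proof is correct and follows essentially the same route as the paper's own argument: Veracity (the $T$-axiom) gives $K_C(B \preceq C) \to (B \preceq C)$, Knowledge Transfer instantiated at $\varphi := (B \preceq C)$ gives $B \preceq C \to (K_C(B \preceq C) \to K_B(B \preceq C))$, and propositional logic closes the gap. Your instantiation of the axiom's roles is the right one, and the added semantic remark is a harmless alternative phrasing of the same idea.
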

\begin{proof}
{{By the $T$-axiom, we have $\models_{KT} K_C (B \preceq C) \to B \preceq C$, and by the ``Knowledge Transfer'' axiom, we have $\models_{KT} B \preceq C \to (K_C (B \preceq C) \to K_B (B \preceq C))$. By~putting these together and using propositional logic, we obtain $\models_{KT} K_C (B \preceq C) \to K_B (B \preceq C)$.}}
\end{proof}

Note that the proof of Proposition \ref{p16} uses the veracity of knowledge but not the axiom for negative introspection, hence the proposition indeed holds in epistemic $KT$ and $S4$ models.

\section{Conclusions} \label{sec4}

In this paper we presented a discussion on comparative knowledge, providing a logical analysis to reason about what agents know about their own epistemic position compared to that of others. Our analysis went beyond the work in~\cite{LPAR}, providing a study of group epistemic comparisons in the more general context of epistemic models of three types ($KT$, $S4$, and $S5$). In~particular, we have shown that the axiom of ``Known Superiority'' is valid in $S5$ and fails in $S4$ models. In~addition we highlighted a series of interesting validities that still do hold in the different models. These insights will be crucial to obtain a complete axiomatization for the logic for comparative knowledge with respect to $S4$ and $KT$ models, which we have left open for future work. In~addition, we envision a number of other different directions in which this work can be taken further:

\bigskip\par\noindent
(a) In 
 \cite{LPAR}, we studied the dynamic epistemic logic with events that lead to the realization of epistemic comparative statements in $S5$-type models. In~particular, we studied a variety of so-called semi-public ``reading'' events in which a given agent or group of agents gains access to the knowledge base or database of another agent or group of agents. These events are called ``semi-public'' when we assume it to be common knowledge who can ``read'' whose database, but~not their content. A~more general class of event-types was studied in~\cite{AIML24}, extending the full power of standard dynamic epistemic logics (see~\cite{BMS,Baltag and Renne:2016,LDII,DHK}), with~features to model agent's data-exchanges, public and private communication. The~study of such dynamic logics for the realization of comparative knowledge with underlying $S4$ or $KT$ models for knowledge, is also left open for future~work.

\bigskip\par\noindent
(b) The study of $S4$-style comparative knowledge was in this paper restricted to relational models, but~it can alternatively be continued in a topological semantics for knowledge, e.g.,~by building on the work in~\cite{TopoPaper}. This would allow for a comparative relation between topologies, that can be used to express a notion of potential knowledge and similarly a potential epistemic comparison~statement.

\bigskip\par\noindent
(c) The semantic clause for epistemic comparison statements in this paper made essential use of the notion of distributed knowledge. %i.e.,
%$$s \models B\preceq C \,\, \mbox{ iff }
%\, \forall t\in S \,\, (sR_B t \, \mbox{implies} \, sR_C t).$$
In ongoing work we are extending the investigation and defining epistemic comparison statements with different underlying epistemic base concepts such as, e.g.,~common knowledge or mutual knowledge. This would for instance replace $R_B$ with $R^B$ (and similar for $C$) in the semantic clause for epistemic comparison statements. Similarly, we are considering a setting that makes the different hierarchical social structures between agents and groups of agents explicit, allowing for a more fine-grained study that can connect to other logical studies of epistemic--social phenomena (e.g., \cite{cascades,diffusion,hom}).

\bigskip\par\noindent
(d) The setting of comparative knowledge in this paper assumes that one group knows at least as much ``about everything'' that another group knows. It can be argued that this is hard to achieve and that it is more realistic to consider agents or groups of agents who ``know more about a specific topic'' when compared to others. In~ongoing work we are exploring ``topic-relative epistemic comparisons''.

\bigskip\par\noindent
(e) Another direction for future work extends the setting with a doxastic notion to express the agent's ``beliefs''. The~combination of different epistemic and doxastic attitudes as well as their dynamics will require a richer type of relational models, e.g., such as those given by the epistemic plausibility models in~\cite{qualitative}. In~such a framework it will be interesting to express the cognitive bias known as ``illusory superiority'', i.e.,~having the wrong belief that your group is the more-expert group. Similarly, we can model scenarios in which groups believe to know that they are the expert group when in fact they are not, as~well as scenarios in which groups make others believe that they are the less-expert group when in fact they are the most-expert~group.

\bigskip\par\noindent
(f) The notions of group knowledge discussed in this paper, such as distributed and common knowledge, can be extended to include the recently introduced concept of common distributed knowledge \cite{LPAR,Suzanne}.\footnote{The concept was first defined (but not axiomatized) in the master's thesis \cite{Suzanne} supervised by A. Baltag, and was independently discussed in \cite{Seligman1,Seligman2}. In \cite{LPAR}, a sound and complete axiomatization for the logic of common distributed knowledge with epistemic comparison is provided and it is shown to be decidable. In the context of dynamic epistemic logic (DEL), the concept is further studied in \cite{LPAR,AIML24,TARK} and continues the earlier line of work on converting distributed knowledge into common knowledge (see \cite{Lonely} and \cite{BS5}).} Common distributed knowledge of a proposition $\varphi$, captures the common knowledge of the distributed knowledge of $\varphi$. More specifically, we use the following definition in~\cite{LPAR}:

\smallskip\par\noindent
{\bf Common distributed knowledge}. Given a family ${\mathcal B}=\{B_1, \ldots, B_n\} \subseteq {\mathcal P}({\cal{A}})$ of groups of agents, we say that $\varphi$ is \emph{common distributed knowledge} among (the groups in) the supergroup ${\mathcal B}$, if~we have that: each group $B \in {\mathcal B}$ has distributed knowledge that $\varphi$; each group $B \in {\mathcal B}$ has distributed knowledge that each other group $B' \in {\mathcal B}$ has distributed knowledge that $\varphi$; etc. (for all iterations). Formally:
%{\small
$$s\models K^{\mathcal B}\varphi \,\, \mbox{ iff }\,\, s\models K_{B_1} K_{B^2}\ldots K_{B^n} \varphi \mbox{ for all sequences (of any length $n\geq 0$) } B_1, \ldots, B_n\in {\mathcal B}.$$
%}

We take $K^{\mathcal B}$ to be the Kripke modality for the relation $R^{\mathcal B}$, given by
$$R^{\mathcal B}\,\, :=\,\, (\bigcup_{B\in {\mathcal B}}R_B)^*$$
(where, as before, $R^*$ is the reflexive--transitive closure of $R$). Note that our earlier notation for the common knowledge possibility relation $R^B$ is now generalized from a single group of agents $B$ to a supergroup (i.e., a set consisting of different groups of agents) $\mathcal{B}$, yielding $R^{\mathcal B}$.

As can be seen from this definition, common distributed knowledge generalizes the notions of common knowledge and distributed knowledge.  Indeed, for~a given family $\{B_1, \ldots, B_n\}$ of groups of agents, we see that common distributed knowledge (here abbreviated as $K^{\{B_1, \ldots, B_n\}}$) is weaker than common knowledge but stronger than distributed knowledge, as~it has features of both. The~following proposition explains this further:

\begin{Proposition}
The following schematic diagram illustrates the strength relationships between the various group attitudes mentioned in this paper (where we denote by $\Longrightarrow$ the entailment relation, i.e.,~valid implication, between~two statements):
$$K^{B_1 \cup \ldots \cup B_n} \varphi \Longrightarrow K^{\{B_1, \ldots, B_n\}} \varphi \Longrightarrow (K_{B_1} \varphi \wedge K_{B_2} \varphi \wedge \ldots \wedge K_{B_n} \varphi) \Longrightarrow K_{B_1 \cup \ldots \cup B_n}\varphi $$
\end{Proposition}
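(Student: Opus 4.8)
The plan is to establish the chain of entailments by working from right to left at the level of the underlying accessibility relations, and then reading off the modal consequences. The key semantic observation is that each of the three group modalities in the diagram is the Kripke box for a corresponding relation, so each implication $K_X\varphi \Longrightarrow K_Y\varphi$ (for all $\varphi$) follows from the reverse relational inclusion $R_Y \subseteq R_X$. Concretely: $K_{B_1}\varphi \wedge \ldots \wedge K_{B_n}\varphi$ is the box for $\bigcup_{i} R_{B_i}$, while $K_{B_1\cup\ldots\cup B_n}\varphi$ is the box for $R_{B_1\cup\ldots\cup B_n} = \bigcap_i R_{B_i}$ (by iterating Observation~\ref{sum}); $K^{\{B_1,\ldots,B_n\}}\varphi$ is the box for $R^{\{B_1,\ldots,B_n\}} = (\bigcup_i R_{B_i})^\ast$; and $K^{B_1\cup\ldots\cup B_n}\varphi$ is the box for $R^{B_1\cup\ldots\cup B_n} = (\bigcup_{a\in B_1\cup\ldots\cup B_n} R_a)^\ast$.

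First I would handle the rightmost implication: from $\bigcap_i R_{B_i} \subseteq \bigcup_i R_{B_i}$ (trivially, assuming the family is nonempty, which it is since groups are nonempty), we get that the box for the intersection is implied by the conjunction of the boxes for the individual $R_{B_i}$. Equivalently, this is just (iterated) group monotonicity $K_{B_i}\varphi \to K_{B_1\cup\ldots\cup B_n}\varphi$, which the excerpt already notes is $KT$-provable; conjoining over $i$ gives the implication. Second, the middle implication: since $R^{\{B_1,\ldots,B_n\}} = (\bigcup_i R_{B_i})^\ast \supseteq \bigcup_i R_{B_i}$, the box for the reflexive-transitive closure implies each $K_{B_i}\varphi$, hence their conjunction; this is the direct analogue of the familiar "common knowledge implies mutual knowledge" fact, now with the $B_i$'s playing the role of individual agents. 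Third, the leftmost implication: I would show $R^{\{B_1,\ldots,B_n\}} \subseteq R^{B_1\cup\ldots\cup B_n}$. It suffices to show $R_{B_i} \subseteq R^{B_1\cup\ldots\cup B_n}$ for each $i$, since the right-hand side is already reflexive and transitive and the left-hand side's closure is the least such relation containing all the $R_{B_i}$. But $R_{B_i} = \bigcap_{a\in B_i} R_a$, and for any single $a\in B_i \subseteq B_1\cup\ldots\cup B_n$ we have $R_{B_i} \subseteq R_a \subseteq \bigcup_{a'\in B_1\cup\ldots\cup B_n} R_{a'} \subseteq R^{B_1\cup\ldots\cup B_n}$; taking any one such $a$ (groups are nonempty) finishes it. Hence the box for $R^{B_1\cup\ldots\cup B_n}$ implies the box for $R^{\{B_1,\ldots,B_n\}}$.

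The only mildly subtle point — and the one I would state carefully rather than the "main obstacle" — is the leftmost step: one must resist the temptation to argue $\bigcup_a R_a \supseteq \bigcup_i R_{B_i}$ directly (which is false in general, since $R_{B_i}$ is an \emph{intersection} of the $R_a$'s, not a union), and instead use that each $R_{B_i}$ is \emph{contained} in each of its members' relations. Once the relational inclusions $\bigcap_i R_{B_i} \subseteq \bigcup_i R_{B_i} \subseteq (\bigcup_i R_{B_i})^\ast \subseteq (\bigcup_{a} R_a)^\ast$ are in hand, the proposition follows by the monotonicity of the Kripke box operator in its accessibility relation (reversing inclusions). I would present the argument semantically over arbitrary $KT$-models, so that the diagram holds in $KT$, $S4$, and $S5$ alike; alternatively it can be given syntactically using Inclusion, (iterated) group monotonicity, and the fixed-point axioms for the two common-knowledge-style modalities.
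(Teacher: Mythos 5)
Your proof is correct. The paper states this proposition without proof (it appears in the concluding section as an illustration of the strength relationships), so there is no official argument to compare against; but your semantic route via the chain of relational inclusions $\bigcap_i R_{B_i} \subseteq \bigcup_i R_{B_i} \subseteq (\bigcup_i R_{B_i})^\ast \subseteq (\bigcup_{a} R_a)^\ast$, together with antitonicity of the Kripke box in its accessibility relation, is exactly the natural way to establish it and is consistent with how the paper proves its other semantic claims (e.g., Observation~\ref{sum} and Proposition~\ref{proposition2}). All three inclusions are verified correctly, and you rightly note that the argument is model-class-independent, so the entailments hold over $KT$, $S4$, and $S5$ alike.

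One remark on your cautionary aside: the parenthetical claim that $\bigcup_i R_{B_i} \subseteq \bigcup_{a} R_a$ is ``false in general'' is itself false, and in fact contradicts the argument you give two sentences earlier. Since each $B_i$ is nonempty and $R_{B_i} = \bigcap_{a \in B_i} R_a \subseteq R_{a_0}$ for any $a_0 \in B_i$, every $R_{B_i}$ is contained in $\bigcup_{a} R_a$, hence so is their union --- which is precisely the second-to-third link in your own displayed chain of inclusions (the $\ast$ operation being monotone). The inclusion that \emph{would} fail, and that you may have had in mind, is the reverse one, $\bigcup_a R_a \subseteq \bigcup_i R_{B_i}$; it is only that direction that is blocked by $R_{B_i}$ being an intersection rather than a union. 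This slip is confined to the commentary and does not affect the validity of any step in the proof.
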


Given a logical framework with common distributed knowledge, allows for the different notions of distributed knowledge, common knowledge, and individual knowledge to be syntactically defined as special cases. While common distributed knowledge is used in~\cite{LPAR} with the specific aim of obtaining reduction laws in a logical system, this epistemic notion is of independent interest and leads to new insights in a study that combines it with the features of comparative knowledge that we have presented in this paper. In~particular, it leads to the following generalization to groups of agents for Propositions \ref{p17}--\ref{proposition9c}:

\begin{Proposition} In $S4$ models, if~a group $C$ (collectively) knows that another group $B$ (collectively) knows at least as much as them, then~this is also common distributed knowledge in the supergroup consisting of $B$ and $C$. In~$S5$ models, if~two groups $B$ and $C$ are (not, respectively) epistemically equivalent, then this fact is common distributed knowledge in the supergroup consisting of $B$ and $C$:
%$$\models_{S5} K_C(B \preceq C) \to Cd^{B,C}(B \preceq C)$$
$$\models_{S4} K_C(B \preceq C) \to K^{\{B,C\}}(B \preceq C)$$
$$\models_{S5} B \equiv C \, \to \, K^{\{B,C\}}(B \equiv C)$$
$$ \models_{S5} B \not \equiv C \, \to \, K^{\{B,C\}}(B \not \equiv C)$$
\end{Proposition}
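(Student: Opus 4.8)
The plan is to generalize each of the three singleton-group results (Propositions \ref{p17}, \ref{proposition9b}, \ref{proposition9c}) by replacing the common-knowledge modality $K^{\{b,c\}}$ with the common-distributed-knowledge modality $K^{\{B,C\}}$ for the supergroup $\{B,C\}$, and by tracking carefully which axioms for $K^{\mathcal B}$ are available. The key observation is that $K^{\{B,C\}}$ is the Kripke modality for the reflexive--transitive closure $R^{\{B,C\}} = (R_B \cup R_C)^*$, so it satisfies the obvious analogues of the $K^A$-Fixed Point and $K^A$-Induction axioms, now phrased with $K_B$ and $K_C$ in place of the individual $K_a$'s: namely $K^{\{B,C\}}\varphi \to (\varphi \wedge K_B K^{\{B,C\}}\varphi \wedge K_C K^{\{B,C\}}\varphi)$ and the induction rule $K^{\{B,C\}}(\varphi \to K_B\varphi \wedge K_C\varphi) \to (\varphi \to K^{\{B,C\}}\varphi)$. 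These are just the standard properties of a transitive-closure modality over the two relations $R_B, R_C$; I would state them as a preliminary lemma (or cite the axiomatization in \cite{LPAR}).

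First, for the $S4$ statement $\models_{S4} K_C(B\preceq C) \to K^{\{B,C\}}(B\preceq C)$, I would mimic the proof of Proposition \ref{p17} verbatim, replacing $K_c$ by $K_C$ and $K_b$ by $K_B$ throughout. From $K_C(B\preceq C)$, Positive Introspection for distributed knowledge (the $S4$ axiom $K_C\varphi \to K_C K_C\varphi$, available for arbitrary groups in Table~\ref{tb0}) gives $K_C K_C(B\preceq C)$; meanwhile Veracity plus Knowledge Transfer ($B\preceq C \to (K_C\psi \to K_B\psi)$) gives $K_B K_C(B\preceq C)$. Conjoining, we get $K_C(B\preceq C) \to (K_B K_C(B\preceq C) \wedge K_C K_C(B\preceq C))$; applying $K^{\{B,C\}}$-Necessitation and then the $K^{\{B,C\}}$-Induction rule (with $\varphi := K_C(B\preceq C)$) yields $K_C(B\preceq C) \to K^{\{B,C\}} K_C(B\preceq C)$, and one more application of Veracity inside the box (via $K^{\{B,C\}}$-monotonicity, $K_C(B\preceq C)\to B\preceq C$) gives the claim.

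For the second statement, $\models_{S5} B\equiv C \to K^{\{B,C\}}(B\equiv C)$, I would follow the pattern of Proposition \ref{proposition9b}. By Proposition \ref{proposition3} (which is stated for arbitrary groups $B,C$, not just singletons) we have $B\equiv C \to (K_B(B\equiv C) \wedge K_C(B\equiv C))$; Necessitation for $K^{\{B,C\}}$ and the $K^{\{B,C\}}$-Induction rule then deliver $B\equiv C \to K^{\{B,C\}}(B\equiv C)$ directly. The third statement, $\models_{S5} B\not\equiv C \to K^{\{B,C\}}(B\not\equiv C)$, follows from the second by the same $S5$ metatheorem already used repeatedly in the paper: whenever $\varphi \to K^{\{B,C\}}\varphi$ is derivable, so is $\neg\varphi \to K^{\{B,C\}}\neg\varphi$ — this is valid because in $S5$ the relations $R_B, R_C$ are symmetric, hence their transitive closure $R^{\{B,C\}}$ is symmetric, so $K^{\{B,C\}}$ is an $S5$-modality and the standard negative-introspection argument applies.

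**Main obstacle.** The only nontrivial point is justifying the $K^{\{B,C\}}$-Fixed Point and $K^{\{B,C\}}$-Induction principles in the form with $K_B, K_C$ rather than individual modalities: one must check that $(R_B\cup R_C)^*$ genuinely behaves as a common-knowledge-style closure over the \emph{group} relations, which is immediate semantically but requires care if one wants a purely syntactic derivation from the axioms of Table~\ref{tb0} (since those axioms only mention $K_a$ for individuals, and $R_B = \bigcap_{a\in B}R_a$ is finer than each $R_a$). Here I would either argue semantically — soundness over the relevant model class, which is all the Propositions in the paper actually claim — or invoke the complete axiomatization of common distributed knowledge from \cite{LPAR}, which supplies exactly these fixed-point and induction schemes. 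Everything else is a routine transcription of the three singleton proofs.
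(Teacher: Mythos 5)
Your proposal is correct and follows essentially the same route as the paper, which itself only gives a one-sentence sketch deferring to the fixed-point and induction axioms for common distributed knowledge from \cite{LPAR} (dropping negative introspection in the $S4$ case). Your elaboration --- transcribing the singleton proofs of Propositions \ref{p17}--\ref{proposition9c} with $K_B,K_C$ in place of $K_b,K_c$, justifying the $K^{\{B,C\}}$-induction scheme semantically via $(R_B\cup R_C)^*$, and deriving the $\not\equiv$ case from the symmetry of that closure in $S5$ --- supplies exactly the details the paper leaves implicit.
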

\par\noindent
The proofs of these generalized statements for common distributed knowledge in the $S5$ case make use of the valid axioms for common distributed knowledge provided in~\cite{LPAR}; while for $S4$ we drop the negative introspection axiom for common distributed knowledge. This opens the door to further studies on what a supergroup commonly can know about its own relative epistemic power.

%%%%%%%%%%%%%%%%%%%%%%%%%%%%%%%%%%%%%%%%%%
%\newpage

\bigskip\par\noindent
{\bf Acknowledgments} We thank the organizers and participants of the 3rd International Workshop on Logic and Philosophy (held at Tsinghua University in 2024) that gave rise to the special issue on Collective Agency and Intentionality, where this work on epistemic comparison is presented.

%%%%%%%%%%%%%%%%%%%%%%%%%%%%%%%%%%%%%%%%%%

\end{document}